\documentclass[12pt]{article}
\usepackage[margin=1in]{geometry}
\usepackage{setspace}
\usepackage{newtxtext}

\usepackage[utf8]{inputenc}

\usepackage{graphicx}
\usepackage{multirow}
\usepackage{amsmath,amssymb,amsfonts}
\usepackage{amsthm}
\usepackage{mathrsfs}
\usepackage[title]{appendix}
\usepackage{xcolor}
\usepackage{textcomp}
\usepackage{manyfoot}
\usepackage{booktabs}
\usepackage{algorithm}
\usepackage{algorithmicx}
\usepackage{algpseudocode}
\usepackage{listings}
\usepackage{enumitem}
\usepackage{array}
\usepackage{caption}
\usepackage{subcaption}
\usepackage{float}
\usepackage{url}

\allowdisplaybreaks[4]
\numberwithin{equation}{section}

\newtheorem{theorem}{Theorem}
\newtheorem{lemma}{Lemma}
\newtheorem{remark}{Remark}

\theoremstyle{definition}
\newtheorem{definition}{Definition}

\begin{document}
	
	
\begin{flushleft}
	{\Large\bfseries Uniform Strong Convergence of Kernel-Type Estimators for $m$-LENQD Sequences with Applications}
\end{flushleft}
	
\bigskip
	
\begin{flushleft}
	{\bfseries Panpan Ba$^{\rm a}$, Zhicheng Feng$^{\rm a}$, Jianhua Shi$^{\rm a,b,*}$, Yangyang Lyu$^{\rm a}$}
		
	\medskip
		
	{\small\itshape
			$^{\rm a}$School of Mathematics and Statistics, Minnan Normal University, Zhangzhou 363000, China
		}
		
		
	{\small\itshape
			$^{\rm b}$Fujian Key Laboratory of Data Science and Statistics, Zhangzhou 363000, China
		}
		
		
	\medskip
		
		{\small $^*$Corresponding author. Email: \textsf{v0085@126.com}}
\end{flushleft}
	
\bigskip
	
	
\noindent{\bfseries Abstract}
	
In this paper, we first derive the uniform strong consistency and the corresponding convergence rates for kernel estimators of both the distribution function and the probability density function for sequences of $m$-linearly extended negative quadrant dependent ($m$-LENQD) random variables under mild regularity conditions. The obtained results significantly improve and extend existing ones in the literature. As their theoretical applications, we further establish the uniform strong consistency and convergence rates for estimators of the hazard rate function and the mode of the density function. Numerical simulations and a real data analysis are conducted to assess the finite-sample performance of the proposed estimators, confirming the asymptotic properties derived in the theoretical analysis.
	
\medskip
	
\noindent{\bfseries Keywords:} $m$-LENQD; kernel-type estimator; uniformly strong consistency; rate of convergence
	
	
	
\section{Introduction}\label{sec1}
The estimation of distribution functions and probability density functions has long been a fundamental problem in statistical inference. With the increasing complexity of data, many nonparametric estimation methods have gained significant attention due to their inherent flexibility and minimal reliance on model assumptions. As a versatile and smooth nonparametric tool, kernel estimation has been widely applied to estimate distribution and density functions.  Compared with traditional parametric models, kernel methods demonstrate superior adaptability to a wide range of practical data distributions.

Throughout this paper, let $\{X_n, n \geq 1\}$ be a sequence of identically distributed random variables with a common, but unknown, marginal probability density function (p.d.f.)$f(x)$ and corresponding distribution function (d.f.) $F(x)$. For its observation sample $X_1, \dots, X_n$, its empirical d.f. $F_n(x)$ is commonly used as the nonparametric estimation of $F(x)$, which is defined as follows
\begin{equation}
	F_n(x) = \frac{1}{n} \sum_{i=1}^{n} I(X_i \leq x), x \in \mathbb{R}.\label{eq-1}
\end{equation}

In fact, when the p.d.f. $f(x)$ exists, the smoothness of the d.f. $F(x)$ is not captured by using the formula \eqref{eq-1}. To address this issue, Yamato (1973) proposed the following kernel distribution estimator
\begin{equation}
	\widehat{F}_n(x)=\int_{-\infty}^{\infty} K\left(\frac{x - s}{h_n}\right) \,\mathrm{d}F_n(s)=\frac{1}{n} \sum_{i=1}^{n} K\left(\frac{x - X_i}{h_n}\right), x \in \mathbb{R},\label{eqn-2}
\end{equation}
where $K(x)$ is a known d.f. and $h_n > 0$ is called the bandwidth satisfying $h_n \to 0$ as $n \to \infty$. Further, if $K(x)$ is differentiable and $K'(x)=k(x)$ is its corresponding p.d.f., the kernel density estimator can be expressed as
\begin{equation}
	\hat{f}_n(x) = \widehat{F}_n'(x) = \frac{1}{n h_n} \sum_{i=1}^{n} k\left(\frac{x - X_i}{h_n}\right),x \in \mathbb{R}.\label{eq-2}
\end{equation}

Many scholars have obtained fruitful results on the asymptotic properties of kernel distribution estimators under dependent samples. For example,  Cai and Roussas (1992) and Cai (1993) investigated the uniform strong consistency of kernel distribution estimators under $\alpha$-mixing and $\varphi$-mixing conditions, respectively. Roussas (1995) studied the asymptotic normality of kernel distribution functions for NA random variables. In particular, Yang (2003) established the uniform strong convergence rate for the empirical distribution estimator under negatively association (NA) samples, and Cui et al. (2012) extended this result to linearly negative quadrant dependent (LNQD) sequences. More recently, Wu et al. (2025) extended the study of the empirical distribution function in Yang (2003) to the kernel distribution estimator $\widehat{F}_n(x)$ under asymptotically almost negatively associated (AANA) random variables, and derived its uniform strong convergence rate. Motivated by these works, this paper further extends earlier results concerning the empirical distribution estimator $F_n(x)$ to the smooth kernel distribution estimator $\widehat{F}_n(x)$ in the setting of  $m$-LENQD sequences. Moreover, under the considered setting, the convergence rate obtained in this paper improves upon that reported in Wu et al. (2025).

For kernel density estimator $\widehat {f}_n (x) $, Ling et al. (2008) systematically studied the strong consistency and uniform strong consistency of pairwise negatively quadrant dependent (pairwise NQD) sequences, but the theoretical convergence rates were not thoroughly explored. Wang et al. (2022) discussed the strong consistency under widely orthant dependent (WOD) sequences and verified its convergence performance via numerical simulations. Wang, S. et al. (2025) established the uniform complete consistency and convergence in mean of order $r$ for kernel-type density estimators under $m$-WOD samples, but they did not specify the corresponding convergence rates. 
Although  Li and Yang (2005), Li (2017) and Li et al. (2018)  obtained strong convergence rates of density estimates under NA, extended negatively dependent (END), and WOD sequences, respectively, their derivations relied heavily on the monotonicity assumption of kernel functions. To overcome these limitations, this paper constructs a kernel estimator in integral form under the more general dependent framework of $m$-LENQD sequences, and derives the uniform strong convergence rate of the kernel density estimator, thereby extending and improving the existing theoretical literature.


In the following, we recall some concepts of dependence structures. The concept of LENQD is introduced by Li (2023) as follows.
\begin{definition}
Two random variables $X_1$ and $X_2$ are said to be extended negative quadrant dependent (ENQD), if there exists a dominating constant $M\ge 1$, such that both inequalities
\[
P(X_1 \le x_1,X_2 \le x_2)\le M P(X_1 \le x_1)P(X_2 \le x_2)
\]
and
\[
P(X_1 > x_1,X_2 > x_2)\le M P(X_1 > x_1)P(X_2 > x_2)
\]
hold for all $x_1,x_2\in\mathbb{R}$. A sequence $\{X_n, n\ge 1\}$ of random variables is called LENQD with dominating constant $M\ge 1$, if for any disjoint subsets $A,B\subseteq \mathbb{N}$ and any collection $\{s_1,s_2,\dots \}$ of real numbers that are either all positive or all negative, the pair
\begin{equation*}
	\biggl(\sum_{i\in A} s_i X_i,\; \sum_{j\in B} s_j X_j\biggr)
	\text{ is ENQD with the same dominating constant } M.
\end{equation*}
\end{definition}

From Examples 1.1 and 1.2 and Remark 1.1 in Li et al. (2023), it is easily seen that the LENQD random variables is a more general dependence structure, with independent, NA, and LNQD sequences as special cases. Moreover, Li et al. (2024)   investigated the strong consistency and strong representation of the Kaplan–Meier and hazard rate estimators under censored LENQD survival data. Motivated by these previous studies on dependent random variables, in particular Li et al. (2023), El Alem (2026) introduced the concept of $m$-LENQD random variables as follows.

\begin{definition}
	Let $m\ge 1$ be a fixed integer. A sequence  of random variables \(\{X_n, n\ge 1\}\) is said to be $m$-LENQD if for any $n\ge 2$ and any $i_1,i_2,\dots,i_n$ such that $|i_k-i_j|\ge m$ for all $1\le k\neq j\le n$, the sequence $X_{i_1},X_{i_2},\dots,X_{i_n}$ is LENQD.
\end{definition}

When $m=1$, $m$-LENQD sequences reduces to the usual LENQD sequences. 
Furthermore, it is well known that if for any $n \geq 2$ and any indices $i_1,i_2,\dots,i_n$ such that $|i_k - i_j| \geq m$ for all $1 \leq k \neq j \leq n$, the random variables $X_{i_1},X_{i_2},\dots,X_{i_n}$ are independent, then we say that $\{X_n, n \geq 1\}$ exhibits $m$-dependence. Hence, the class of $m$-LENQD sequences includes NA, LNQD, LENQD, $m$-NA, $m$-LNQD, and $m$-dependence. So investigating the basic properties and limiting behavior of $m$-LENQD random variables is of great interest.

In this paper, we firstly derive the uniform strong convergence rates for the kernel distribution estimator $\hat{F}_n(x)$ and the kernel density estimator $\hat{f}_n(x)$ under a set of general regular conditions. Furthermore, considering that the risk rate function $r(x)$ (see(\ref{eq:hazard_rate_def})) has important application value in reliability theory and biomedicine, and that the mode $\theta$ (see(\ref{eqn-1})) of a population, as a robust location parameter resistant to outliers, plays a key role in characterizing data typicality and cluster analysis, we further apply our theoretical results to the estimation of these important statistical functionals.
	
\section{Assumptions and main results}\label{sec2}
For convenience, all limit relations are expressed as $n\to\infty$, unless otherwise specified. Let the symbol $C$ stand for a positive constant not depending on $n$. For two positive functions $u(n)$ and $v(n)$, we denote $u(n)=O(v(n))$ if  $\limsup_{n\to\infty}u(n)/v(n)<\infty$.  Now, we introduce a set of assumptions to prove our main results.

\begin{enumerate}[label=(H\arabic*)]
	\item 
	\begin{enumerate}[label=(\roman*)]
		\item $\int_{\mathbb{R}} u k(u) du = 0$ and $\int_{\mathbb{R}} u^2 k(u) du < \infty$.
		\item The first-order derivative $k'(\cdot)$ exists and satisfies $\int_{\mathbb{R}} |k'(u)| du < \infty$.
	\end{enumerate}
	\item The bandwidth $h_n > 0$ satisfies $h_n \to 0$ and $n^{-1} h_n^{-2} \log n \to 0$ as $n \to \infty$.
	\item $f(x)$ is twice differentiable on $\mathbb{R}$, with $f^{''}(\theta) \neq 0$, $\sup_{x \in \mathbb{R}} |f^{'}(x)| < \infty$, and $\sup_{x \in \mathbb{R}} |f^{''}(x)| < \infty$.
	\item The mode $\theta$ satisfies that for any $\varepsilon > 0$ and $x \in \mathbb{R}$, there exists $\gamma > 0$ such that $|x - \theta| \geq \varepsilon$ implies that $|f(\theta) - f(x)| \geq \gamma$.
\end{enumerate}


\begin{remark}\label{rem-2}
	The above assumptions are fairly general. Assumptions (H1) are standard regularity conditions for the kernel function $K(x)$ and are satisfied by many common kernels, including the Gaussian, Epanechnikov, quartic, and triweight kernels; see, for example, Wu et al. (2025). Assumption (H2) is needed for our results and holds for bandwidth sequences such as $h_n=n^{-\alpha}$ with $0<\alpha<1/2$. Assumption (H3) is a technical condition on the density function $f(x)$ and has also been used in Liu and Yang (2008), Wang et al. (2022), and Wang et al. (2025). Finally, Assumption (H4) is standard in kernel mode estimation; it guarantees that the mode $\theta$ is uniquely defined and consistently estimable. Similar assumptions are widely used in the literature; see, for example, Khardani et al. (2012), Wang, S. et al. (2025), and El Alem et al. (2026).
\end{remark}

In the following, we establish the uniform strong convergence results for kernel‑type distribution and density estimators under $m$-LENQD dependence. Detailed proofs are given in the Appendix.
\begin{theorem}\label{th:1'}
	Let $\{X_n, n \geq 1\}$ be a sequence of $m$-LENQD random variables, and let (H1)–(H3) hold true, then
	\begin{equation}\label{eqa}
		\sup_{x \in \mathbb{R}} \left| \widehat{F}_n(x) - F(x) \right| = O\left(\sqrt{\frac{\log n}{n}}\right)+ O(h_n^2), \quad \text{a.s.}
	\end{equation}
\end{theorem}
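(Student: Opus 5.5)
We split the error into a stochastic term and a bias term:
\[
\sup_{x}|\widehat F_n(x)-F(x)|\le \sup_x|\widehat F_n(x)-E\widehat F_n(x)|+\sup_x|E\widehat F_n(x)-F(x)|.
\]
We handle the stochastic term by integrating by parts, which reduces it to the empirical process. Since $K$ is a distribution function with density $k$, we have
\[
\widehat F_n(x)=\int K\Big(\frac{x-s}{h_n}\Big)\,dF_n(s)=\int F_n(x-h_nu)k(u)\,du .
\]
Similarly, $E\widehat F_n(x)=\int F(x-h_nu)k(u)\,du$. Because $\int k=1$, this gives
\[
\sup_x|\widehat F_n(x)-E\widehat F_n(x)|\le \sup_{y}|F_n(y)-F(y)|.
\]
So the random part reduces to a uniform rate for the empirical d.f. of an $m$-LENQD sequence. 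The bandwidth does not enter this term, which is why the rate is $\sqrt{\log n/n}$ rather than something involving $h_n$.

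For the bias term, we take a second-order Taylor expansion of $F$, using (H3), under which $F''=f'$ is bounded:
\[
F(x-h_nu)-F(x)=-h_nuf(x)+\tfrac12 h_n^2u^2 f'(\xi).
\]
By (H1)(i), the linear term vanishes after integration against $k$. The remainder is bounded uniformly by
\[
\tfrac12 h_n^2\sup|f'|\int u^2k(u)\,du=O(h_n^2).
\]

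The main step is showing that $\sup_y|F_n(y)-F(y)|=O(\sqrt{\log n/n})$ a.s. under $m$-LENQD; we expect this to be the main obstacle. We plan to proceed in three steps.

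\emph{Reduction to LENQD blocks.} Split the indices into the $m$ residue classes $\{i\equiv j \bmod m\}$. Within each class, indices are at distance at least $m$, so the corresponding subsequence is LENQD. Each $F_n(y)-F(y)$ is then the sum of $m$ averages, each of size about $n/m$, of the centered indicators $I(X_i\le y)-F(y)$.

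\emph{Exponential inequality.} Indicators $I(X_i\le y)$ are nondecreasing functions of $X_i$ (up to sign). We will use, as is standard for LENQD in Li et al. (2023), that monotone transforms preserve the ENQD pairwise structure. Applying the exponential moment bound $E\exp(t\sum\xi_i)\le M^{c}\prod E\exp(t\xi_i)$ (a Hoeffding-type inequality for bounded LENQD variables), we obtain
\[
P\Big(\Big|\sum_{i\in\text{class}}(I(X_i\le y)-F(y))\Big|>n\varepsilon\Big)\le C\exp(-cn\varepsilon^2)
\]
with constants depending on $M$ and $m$. If the product bound only holds up to a factor growing with $n$, we will try the Bernstein route via pairwise ENQD covariance bounds and truncated blocks. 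This is the delicate point, because we need a factor that is at most polynomial in $n$.

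\emph{Grid and Borel--Cantelli.} Choose grid points $y_{n,k}=F^{-1}(k/N_n)$ with $N_n\asymp\sqrt{n/\log n}$. Monotonicity of $F_n$ and $F$ gives
\[
\sup_y|F_n-F|\le\max_k|F_n(y_{n,k})-F(y_{n,k})|+1/N_n.
\]
Take $\varepsilon_n=C_0\sqrt{\log n/n}$ with $C_0$ large. The union bound is then $N_n\, n^{-cC_0^2}$, which is summable, and Borel--Cantelli yields the rate.

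Combining the stochastic and bias bounds gives \eqref{eqa}. Assumption (H2) is not actually needed for the distribution estimator except through $h_n\to0$; it is only needed later for $\hat f_n$.
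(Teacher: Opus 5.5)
Your proposal follows the paper's proof essentially step for step: the same integration by parts $\widehat F_n(x)=\int F_n(x-h_nu)k(u)\,du$ reducing the stochastic part to $\sup_y|F_n(y)-F(y)|$, the same second-order Taylor bound $O(h_n^2)$ for the bias, and the same grid, union bound and Borel--Cantelli argument for the empirical d.f.\ (the paper's Lemma A.3, which uses the finer grid $F(x_{n,k})=k/n$, an immaterial difference). The one real difference is that the paper neither splits into residue classes nor derives a product bound for $E\exp(t\sum\xi_i)$; it directly invokes El Alem's (2026) $m$-LENQD Bernstein inequality (Lemma A.2, prefactor $2mM$), which already contains your residue-class reduction and settles the prefactor issue you flagged — citing it is the right move, since iterating the bivariate ENQD inequality only gives a factor $M^{n-1}$, which would break summability.
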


\begin{remark}
	Compared with the results of Yang (2003) and Cui et al. (2012), Theorem \ref{th:1'} provides the following improvements and extensions:
	\begin{enumerate}[label=(\roman*)]
		\item  The dependence structure of the sample is extended from NA and LNQD sequences to  $m$-LENQD sequences,which is much more general; 
		\item The classical empirical distribution function $F_n(x)$ is generalized to the smoother kernel distribution estimator $\widehat{F}_n(x)$;
		\item  With $h_n = n^{-1/4} \log^{1/4} n$, the convergence rate in Theorem \ref{th:1'} is sharper than $O(n^{-1/2} (\log^{1/2} n) \log \log n)$ obtained in Yang (2003) and Cui et al. (2012).
	\end{enumerate}
\end{remark}

\begin{theorem}\label{th:1}
	Let $\{X_n, n \geq 1\}$ be a sequence of $m$-LENQD random variables, and let (H1)–(H3) hold true. If $n^{-1} h_n^{-2} \log n \to 0$, as $n \to \infty$, then  
	\begin{equation}\label{eqn-8}  
		\sup_{x \in \mathbb{R}} |\hat{f}_n(x) - f(x)| = O\left(\sqrt{\frac{\log n}{nh_n^2}}\right)+ O(h_n^2), \quad \text{a.s.}   
	\end{equation}
\end{theorem}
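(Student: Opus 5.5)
The plan is to split the error into a stochastic part and a bias part,
\[
\sup_x|\hat f_n(x)-f(x)|\le \sup_x|\hat f_n(x)-E\hat f_n(x)|+\sup_x|E\hat f_n(x)-f(x)|,
\]
and to reduce the stochastic part to the empirical distribution function through the integral representation of $\hat f_n$.

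\textbf{Bias term.} Substituting $u=(x-s)/h_n$ gives
\[
E\hat f_n(x)-f(x)=\int k(u)\bigl[f(x-h_nu)-f(x)\bigr]\,du .
\]
Expand $f(x-h_nu)$ by Taylor's formula with Lagrange remainder, using (H3). The linear term vanishes because $\int uk(u)\,du=0$ by (H1)(i). The remainder is bounded by $\tfrac12 h_n^2\sup|f''|\int u^2k(u)\,du$, uniformly in $x$. So the bias term is $O(h_n^2)$.

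\textbf{Stochastic term.} Write $\hat f_n(x)=h_n^{-1}\int k\bigl((x-s)/h_n\bigr)\,dF_n(s)$ and $E\hat f_n(x)=h_n^{-1}\int k\bigl((x-s)/h_n\bigr)\,dF(s)$. Integrating by parts (boundary terms vanish, since $F_n-F$ is bounded and $k(\pm\infty)=0$ follows from $k$ integrable with $k'$ integrable) gives
\[
\hat f_n(x)-E\hat f_n(x)=\frac{1}{h_n^{2}}\int \bigl(F_n(s)-F(s)\bigr)\,k'\!\left(\frac{x-s}{h_n}\right)ds .
\]
Hence
\[
\sup_x|\hat f_n(x)-E\hat f_n(x)|\le h_n^{-1}\int|k'(u)|\,du\cdot\sup_s|F_n(s)-F(s)|.
\]
This step is exactly what removes the need for monotone kernels: only (H1)(ii) is used. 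It remains to show $\sup_s|F_n(s)-F(s)|=O(\sqrt{\log n/n})$ a.s.; multiplying by $h_n^{-1}$ then gives the term $O\bigl(\sqrt{\log n/(nh_n^2)}\bigr)$. The assumption $n^{-1}h_n^{-2}\log n\to0$ ensures this term tends to zero.

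I expect this step to be shared with the proof of Theorem \ref{th:1'}. Repeating the same integration by parts with $K$ in place of $k$ gives $\sup|\widehat F_n-E\widehat F_n|\le\sup|F_n-F|$, so Theorem \ref{th:1'} also reduces to it.

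\textbf{Uniform bound for $F_n-F$ (main obstacle).}
\begin{enumerate}[label=(\arabic*)]
\item Split the indices $1,\dots,n$ into $m$ residue classes modulo $m$. Within each class the variables are LENQD.
\item The indicators $I(X_i\le s)$ are monotone functions of the $X_i$, so within a class they are again LENQD with the same constant $M$. This should follow from the pairwise ENQD definition, as in Li (2023).
\item Apply an exponential (Bernstein/Hoeffding-type) inequality for bounded LENQD variables, available in Li et al. (2023) or El Alem (2026) with an extra factor depending on $M$. For fixed $s$ this gives
\[
P\bigl(|F_n(s)-F(s)|>\varepsilon_n\bigr)\le C\exp(-cn\varepsilon_n^2).
\]
\item Grid the line at points $s_{n,j}$ with $F(s_{n,j})=j/N_n$ and $N_n\asymp\sqrt{n}$. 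By monotonicity of $F_n$ and $F$,
\[
\sup_s|F_n-F|\le\max_j|F_n(s_{n,j})-F(s_{n,j})|+1/N_n .
\]
\item Take $\varepsilon_n=C_0\sqrt{\log n/n}$ with $C_0$ large. The union bound over $N_n+1$ points gives probabilities $O(n^{1/2-2})$, which are summable, and Borel--Cantelli finishes.
\end{enumerate}

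The delicate point is confirming that the available exponential inequality for $m$-LENQD indicators holds with a constant that does not degrade the $\sqrt{\log n/n}$ rate. If only a pairwise or variance-type inequality is available, I would instead derive the exponential bound directly via the Markov inequality applied to $E\exp\bigl(t\sum_i(\xi_i-E\xi_i)\bigr)$. I would control the product of moment generating functions over each residue class using the LENQD property (a factor $M^{\text{const}}$), handle the $m$ classes by Hölder's inequality, and optimize over $t\asymp\sqrt{\log n/n}$.
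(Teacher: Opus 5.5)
Your proposal is correct and follows the paper's proof essentially step for step: the same split into $h_n^{-1}\int (F_n-F)(x-h_nu)\,dk(u)$ plus the Taylor bias term, the same bound $h_n^{-1}\sup|F_n-F|\int|k'|$, and the same grid, exponential inequality and Borel--Cantelli argument for $\sup|F_n-F|=O(\sqrt{\log n/n})$ (Lemma \ref{lem:4}); using a grid of size $\asymp\sqrt n$ instead of $n$ changes nothing. One caution: preservation of $m$-LENQD under $x\mapsto I(x\le s)$ and the Bernstein bound with constant $2mM$ must be imported from El Alem (2026), as the paper does in Lemmas \ref{lem:1} and \ref{lem:3}, because neither is a consequence of pairwise ENQD alone, and your fallback (iterating the two-block ENQD inequality on moment generating functions) would give a factor of order $M^{n/m}$ rather than $M^{\mathrm{const}}$, which destroys the rate when $M>1$.
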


\begin{remark}\label{re:3}
	By taking $h_n = n^{-1/6} \log^{1/6} n$, we can easily get the uniformly strong convergence rate as
	\begin{equation}
		\sup_{x \in \mathbb{R}} |\hat{f}_n(x) - f(x)| = O(n^{-1/3} \log^{1/3}n), \quad \text{a.s.}\label{eqn-9} 
	\end{equation}
	
 Wang et al. (2022) established the strong consistency of kernel density estimators for WOD sequences. In contrast, we obtain uniform strong consistency and extend the convergence domain from compact intervals $[a,b]$ to $\mathbb{R}$. Since $m$-LENQD and WOD sequences do not imply each other, our results further extend Wang et al. (2022). Moreover, although Assumption (H3) is stronger than condition (A1)(iii) of Wu et al. (2025) under AANA samples, we obtain a slightly sharper convergence rate. 

\end{remark}

As direct applications of Theorem \ref{th:1'} and Theorem \ref{th:1}, we further investigate the convergence properties of the kernel-type estimators of the hazard rate function and the mode under $m$-LENQD sequences.

\section{Kernel-type estimators for the hazard rate and mode}
\subsection{Kernel estimator for the hazard rate}
The hazard rate function is a fundamental quantity for characterizing the survival time of a random variable or the intensity of event occurrences. It measures the conditional probability that an individual who has survived up to time $x$ experiences a specific irreversible event (e.g., failure, disease onset) instantaneously thereafter. Owing to its wide applicability in reliability engineering, medical statistics, and related fields, estimation of the hazard rate function $r(x)$ is of significant theoretical and practical importance. Its explicit expression is given as follows:

\begin{equation} 
	r(x) = \frac{f(x)}{1 - F(x)}.\label{eq:hazard_rate_def}
\end{equation}

Based on \eqref{eq:hazard_rate_def}, an intuitive estimation method is to directly use the histogram density estimator $f_n (x) $ and the empirical d.f. $F_n (x) $ to approximate $f (x) $ and $F (x) $, which yields
\begin{equation} \label{eq:hazard_rate_estimators}
	r_n(x) = \frac{f_n(x)}{1 - F_n(x)},
\end{equation}
where \(F_n(x)\) is given in \eqref{eq-1}. For a given  $h_n$, the histogram density estimator $f_n(x)$  is defined as
\[
f_n(x) = \frac{1}{2n h_n} \sum_{i = 1}^{n} I(x - h_n < X_i \leq x + h_n).
\]

However, both $f_n(x)$ and $F_n(x)$ are piecewise functions, leading to a non‑smooth hazard rate estimator $r_n(x)$. To address this, existing studies often replace $f_n(x)$ with its kernel counterpart $\hat{f}_n(x)$ while retaining $F_n(x)$. For more details, see Li and Yang (2005), Li (2017), Li et al. (2018), etc. Since practical applications typically require a smoother hazard rate function, we further apply kernel smoothing for both $f(x)$ and $F(x)$, thereby obtaining a fully kernel‑based estimator of the hazard rate function
\begin{equation} 
	\hat{r}_n(x) = \frac{\hat{f}_n(x)}{1 - \widehat{F}_n(x)},\label{eq:4.2}
\end{equation}
where $\widehat{F}_n(x)$ and $\hat{f}_n(x)$ are given by \eqref{eqn-2} and \eqref{eq-2}, respectively.

Therefore, by Theorem \ref{th:1'} and Theorem \ref{th:1}, the uniform strong consistency rate for kernel estimator $\hat{r}_n$ is obtained as follows
\begin{theorem}\label{th:2}
	Let $\{X_n, n \geq 1\}$ be a sequence of $m$-LENQD random variables, and assume that the conditions of Theorem \ref{th:2} hold. Let $\bar{x} = \sup\{x \in R : F(x) < 1\}$, and the interval $[c, d]$ is any subset of $(-\infty, \bar{x})$. Then
	
	\begin{equation}\label{eqn-10} 
		\sup_{x \in [c,d]} |\hat{r}_n(x) - r(x)| = O\left(\sqrt{\frac{\log n}{nh_n^2}}\right)+ O(h_n^2), \quad \text{a.s.}
	\end{equation}	
\end{theorem}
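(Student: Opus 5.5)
We reduce Theorem \ref{th:2} to Theorems \ref{th:1'} and \ref{th:1} through an algebraic decomposition of the ratio. (The hypothesis ``conditions of Theorem \ref{th:2}'' is read as those of Theorem \ref{th:1}, i.e. (H1)--(H3) together with $n^{-1}h_n^{-2}\log n\to 0$.)

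\textbf{Step 1 (bounds away from zero).} Since $d<\bar{x}$ and $F$ is nondecreasing, we have $1-F(x)\ge 1-F(d)=:\delta>0$ for all $x\in[c,d]$. By Theorem \ref{th:1'}, $\sup_x|\widehat F_n(x)-F(x)|\to 0$ a.s., because both $\sqrt{\log n/n}$ and $h_n^2$ tend to zero. Hence, almost surely, for all $n$ large enough,
\[
\inf_{x\in[c,d]}\bigl(1-\widehat F_n(x)\bigr)\ge \delta/2 .
\]
In addition, $f$ is bounded on $\mathbb{R}$. One way to see this: under (H3) $f'$ is bounded, so $f$ is uniformly continuous and integrable, and therefore bounded. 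Consequently $\sup_{x\in[c,d]}r(x)\le \sup f/\delta<\infty$.

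\textbf{Step 2 (decomposition).} For $x\in[c,d]$ we write
\[
\hat r_n(x)-r(x)=\frac{\hat f_n(x)-f(x)}{1-\widehat F_n(x)}+\frac{f(x)\bigl(\widehat F_n(x)-F(x)\bigr)}{\bigl(1-\widehat F_n(x)\bigr)\bigl(1-F(x)\bigr)} .
\]
By Step 1, almost surely for large $n$,
\[
\sup_{[c,d]}|\hat r_n-r|\le \frac{2}{\delta}\sup_{\mathbb{R}}|\hat f_n-f|+\frac{2\sup f}{\delta^2}\sup_{\mathbb{R}}|\widehat F_n-F| .
\]

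\textbf{Step 3 (rates).} Theorem \ref{th:1} bounds the first supremum by $O(\sqrt{\log n/(nh_n^2)})+O(h_n^2)$ a.s. Theorem \ref{th:1'} bounds the second by $O(\sqrt{\log n/n})+O(h_n^2)$ a.s. Since $h_n\to0$, we have $h_n\le 1$ eventually, so $\sqrt{\log n/n}\le\sqrt{\log n/(nh_n^2)}$, and the second term is absorbed into the first. This yields \eqref{eqn-10}.

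The only delicate point is Step 1, namely the almost-sure lower bound on the random denominator $1-\widehat F_n$. It is needed to keep $\hat r_n$ well defined and uniformly controlled, and it follows from the uniform consistency in Theorem \ref{th:1'} combined with the choice of $[c,d]$ strictly below $\bar x$. The boundedness of $f$ is routine: if one prefers not to derive it from (H3), it can be taken as implicit in Theorem \ref{th:1}, since there $\hat f_n$ converges uniformly to $f$ and $\hat f_n$ is bounded for each fixed $n$.
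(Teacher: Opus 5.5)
Your proposal is correct and follows essentially the same route as the paper. It uses the same ratio decomposition over $\widehat S_n(x)S(x)$, derives the almost-sure lower bound $1-\widehat F_n\ge S(d)/2$ on $[c,d]$ from Theorem \ref{th:1'}, and then arrives at the same final bound via Theorems \ref{th:1'} and \ref{th:1}. Your explicit absorption of $\sqrt{\log n/n}$ into $\sqrt{\log n/(nh_n^2)}$ and your justification that $f$ is bounded simply make explicit what the paper leaves implicit.
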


Similar to Remark \ref{re:3}, by taking $h_n = n^{-1/6} \log^{1/6} n$ in Theorem \ref{th:2}, we obtain  

\begin{equation}
	\sup_{x \in [c,d]} |\hat{r}_n(x) - r(x)| = O(n^{-1/3} \log^{1/3} n), \quad \text{a.s.}\label{eqn-11}
\end{equation}

\begin{remark}
	Based on $\alpha$-mixing samples,  Cai and Roussas (1992)
	 attained a uniform strong consistency rate of $O(n^{-1/4} (\log \log n)^{1/4})$. Furthermore, Wu et al. (2025) achieved a rate of $O(n^{-1/4} (\log n)^{1/4})$ under AANA sequences. Both of these rates are slower than the result in \eqref{eqn-11}. Consequently, Theorem \ref{th:2} generalizes and extends the results of Cai and Roussas (1992) and Wu et al. (2025) from $\alpha$-mixing and AANA sequences to the framework of $m$-LENQD sequences.
\end{remark}

\subsection{Kernel estimator for the mode}
Among the statistics used to characterize features of a population distribution, in addition to the mean and quantiles, the mode is also an important quantity of interest. It refers to the value at which a distribution exhibits a pronounced concentration and  represents the typical level of the data from an alternative perspective. In some contexts, the mode may better capture the intuitive notion of a typical value in a population than the mean or median. Moreover, it is more robust to extreme observations and is easy to compute and extract in practice, and therefore receives particula attention in many fields. In particular, in the problem of estimating the mode of a density function, the mode is often used in unsupervised learning to measure data typicality and, in cluster analysis, to characterize cluster centers. Hence, studying the mode is of substantial practical significance. In this paper, under the assumption that the probability density function \(f(x)\) is unimodal, the population mode $\theta$ is defined as
\begin{equation}
	\theta := \arg\sup_{x\in\mathbb{R}} f(x).\label{eqn-1}
\end{equation}

As to mode estimation, Parzen (1962) established the weak consistency and asymptotic normality of the kernel mode estimator in the i.i.d. setting. Following Parzen’s work, mode estimation for independent samples has been extensively studied. See, for example,  Konakov (1973) and Romano (1988). In recent years, substantial progress has been made on mode estimation under dependent data. For instance,  Khardani et al. (2012) established strong consistency and convergence rates for the kernel mode estimator under right-censored \(\alpha\)-mixing data.  Hwang and Shin (2016) investigated mode estimation for strictly stationary \(\psi\)-weakly dependent sequences based on kernel density estimators. Wang, W. et al. (2025) studied the consistency of the mode estimator \(\theta_n\) for $m$-END samples when \(f_n(x)\) is a frequency polygon estimator. Motivated by these works, we further examine the strong consistency of kernel mode estimation for $m$-LENQD sequences.

A natural estimator of the population mode $\theta$ can be derived from $\hat f_n(x)$ given in \eqref{eq-2} as follows

\begin{equation}\label{eqn-21}
	\theta_n \in \arg \sup_{x \in \mathbb{R}} \hat{f}_n(x).
\end{equation}

Based on Theorem \ref{th:1}, we can establish the strong consistency for the kernel mode estimator $\theta_n$ as follows.
\begin{theorem}\label{th:3}
	Let $\{X_n, n \geq 1\}$ be a sequence of $m$-LENQD random variables, and let (H1)-(H4) hold. Then, 
	\begin{equation}\label{eqn-12}|\theta_n - \theta| = O\left(\left(\frac{\log n}{n h_n^2}\right)^{1/4}\right) + O(h_n), \quad \text{a.s.}
	\end{equation}
\end{theorem}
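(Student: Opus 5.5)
The plan is to combine Theorem \ref{th:1} with a second-order Taylor expansion of $f$ around the mode $\theta$. This is the classical Parzen-type argument. The deviation $|f(\theta_n)-f(\theta)|$ is controlled by twice the uniform error of $\hat f_n$. The curvature $f''(\theta)\neq 0$ then converts this deviation into a bound on $|\theta_n-\theta|^2$.

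\textbf{Step 1: bound $f(\theta)-f(\theta_n)$.} By the definition \eqref{eqn-21} of $\theta_n$, we have $\hat f_n(\theta_n)\ge \hat f_n(\theta)$. Hence
\begin{align*}
0\le f(\theta)-f(\theta_n) &= \bigl(f(\theta)-\hat f_n(\theta)\bigr)+\bigl(\hat f_n(\theta)-\hat f_n(\theta_n)\bigr)+\bigl(\hat f_n(\theta_n)-f(\theta_n)\bigr)\\
&\le 2\sup_{x\in\mathbb{R}}|\hat f_n(x)-f(x)|.
\end{align*}
Theorem \ref{th:1} then gives $f(\theta)-f(\theta_n)=O\bigl(\sqrt{\log n/(nh_n^2)}\bigr)+O(h_n^2)$ a.s. Under (H2) this tends to $0$.

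\textbf{Step 2: strong consistency.} Suppose on a set of positive probability that $|\theta_n-\theta|\ge\varepsilon$ infinitely often. Then (H4) yields $f(\theta)-f(\theta_n)\ge\gamma>0$ infinitely often, which contradicts Step 1. Hence $\theta_n\to\theta$ a.s.

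\textbf{Step 3: rate via Taylor expansion.} Since $f$ is twice differentiable and $\theta$ is an interior maximizer, $f'(\theta)=0$. Taylor's theorem gives
\[
f(\theta_n)-f(\theta)=\tfrac12 f''(\theta_n^*)(\theta_n-\theta)^2,
\]
where $\theta_n^*$ lies between $\theta_n$ and $\theta$. This step is where I expect the main obstacle. We need $|f''(\theta_n^*)|$ bounded away from zero for large $n$. Step 2 gives $\theta_n^*\to\theta$, but (H3) only asserts that $f''$ exists and is bounded, not that it is continuous at $\theta$.

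To get around this, I would first try the Peano-form expansion, which needs only twice differentiability at $\theta$:
\[
f(\theta_n)-f(\theta)=\tfrac12 f''(\theta)(\theta_n-\theta)^2+o\bigl((\theta_n-\theta)^2\bigr).
\]
Since $\theta_n\to\theta$ a.s., for all large $n$ this yields $f(\theta)-f(\theta_n)\ge \tfrac14|f''(\theta)|(\theta_n-\theta)^2$ a.s. Here $f''(\theta)<0$ because $\theta$ is a maximum and $f''(\theta)\neq 0$. If one prefers the mean-value form, one can instead assume continuity of $f''$ near $\theta$, which is implicit in the paper's setting.

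\textbf{Step 4: conclude.} Combining with Step 1,
\[
(\theta_n-\theta)^2\le \frac{4}{|f''(\theta)|}\Bigl(O\bigl(\sqrt{\log n/(nh_n^2)}\bigr)+O(h_n^2)\Bigr)\quad \text{a.s.}
\]
Taking square roots and using $\sqrt{a+b}\le\sqrt a+\sqrt b$ gives $|\theta_n-\theta|=O\bigl((\log n/(nh_n^2))^{1/4}\bigr)+O(h_n)$ a.s., which is \eqref{eqn-12}.
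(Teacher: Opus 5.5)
Your proof is correct and follows the paper's argument. Like the paper, you bound $f(\theta)-f(\theta_n)$ by $2\sup_{x\in\mathbb{R}}|\hat f_n(x)-f(x)|$, obtain $\theta_n\to\theta$ a.s.\ from (H4), and then use a second-order Taylor expansion at the mode to turn this into a bound on $(\theta_n-\theta)^2$ before invoking Theorem \ref{th:1}. Your Peano-form expansion is in fact more careful than the paper's mean-value form: the paper divides by $|f''(\theta_n^*)|$ without showing it stays bounded away from zero, which would need continuity of $f''$ at $\theta$, and (H3) does not state that.
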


\begin{remark}
	In fact, the estimate \(\theta_n\) satisfying \eqref{eqn-21} is not necessarily unique. Therefore, all the results in this paper will pertain to any sequence  of random variables \(\theta_n\) satisfying$f(\theta_n)=\sup_{x\in\mathbb{R}}\hat f_n(x).$
	In particular, we can specify our choice by taking
	$\theta_n=\inf\left\{s\in\mathbb{R}\,\big|\,\hat f_n(s)=\sup_{x\in\mathbb{R}}\hat f_n(x)\right\}.$
\end{remark}
	
\section{Simulation Study}
In this section, we conduct a simulation study using R software to investigate the  finite sample size performance of our results. 
For any fixed $n \ge 3$ and $m \ge 1$, let normal random vector
$(e_1,e_2,\cdots,e_{n+m})^T \sim N_{n+m}(\boldsymbol{0},\boldsymbol{\varSigma_1})$, where \(\mathbf{0}\) denotes the zero vector and the covariance matrix
\[
\varSigma_1 = 
\begin{pmatrix}
	1+\rho^2 & -\rho & 0 & \cdots & 0 & 0 & 0 \\
	-\rho & 1+\rho^2 & -\rho & \cdots & 0 & 0 & 0 \\
	0 & -\rho & 1+\rho^2 & \cdots & 0 & 0 & 0 \\
	\vdots & \vdots & \vdots & & \vdots & \vdots & \vdots \\
	0 & 0 & 0 & \cdots & 1+\rho^2 & -\rho & 0 \\
	0 & 0 & 0 & \cdots & -\rho & 1+\rho^2 & -\rho \\
	0 & 0 & 0 & \cdots & 0 & -\rho & 1+\rho^2 \\
\end{pmatrix}_{(n+m) \times (n+m)},
\]
with \(0<\rho<1\). As established by Joag‑Dev and Proschan (1983), multivariate normal random vectors with non‑positive off‑diagonal covariance entries are NA. It follows that $(e_1,e_2,\dots,e_{n+m})$ is a NA vector, and the resulting random variables represent a special case of $m$-LENQD sequences.
\begin{figure}[!ht]
	\centering
	\includegraphics[width=0.9\textwidth]{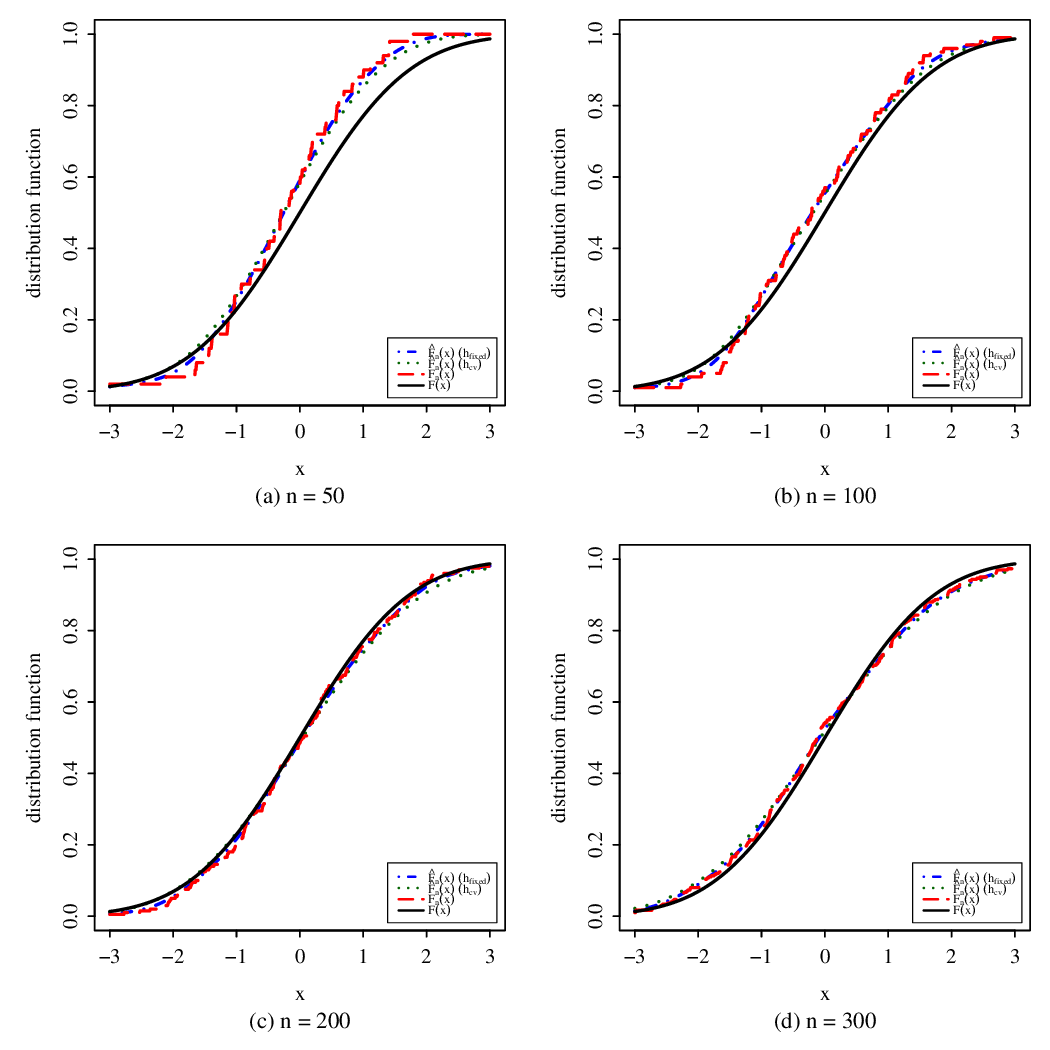}
	\caption{Comparison of histogram density estimator and kernel function estimator with $\rho = 0.1$. (a) $n=50$, (b) $n=100$, (c) $n=200$, and (d) $n=300$.}\label{fig:1}
\end{figure}
For fixed positive integer $m$, let $X_i=\sum_{k=1}^m e_{i+k-1}$ for each $i\ge 1$, then it is easy to show that $\{X_i,i\ge 1\}$ is a sequence of m‑LENQD random variables.

\begin{figure}[!ht]
	\centering
	\includegraphics[width=0.9\textwidth]{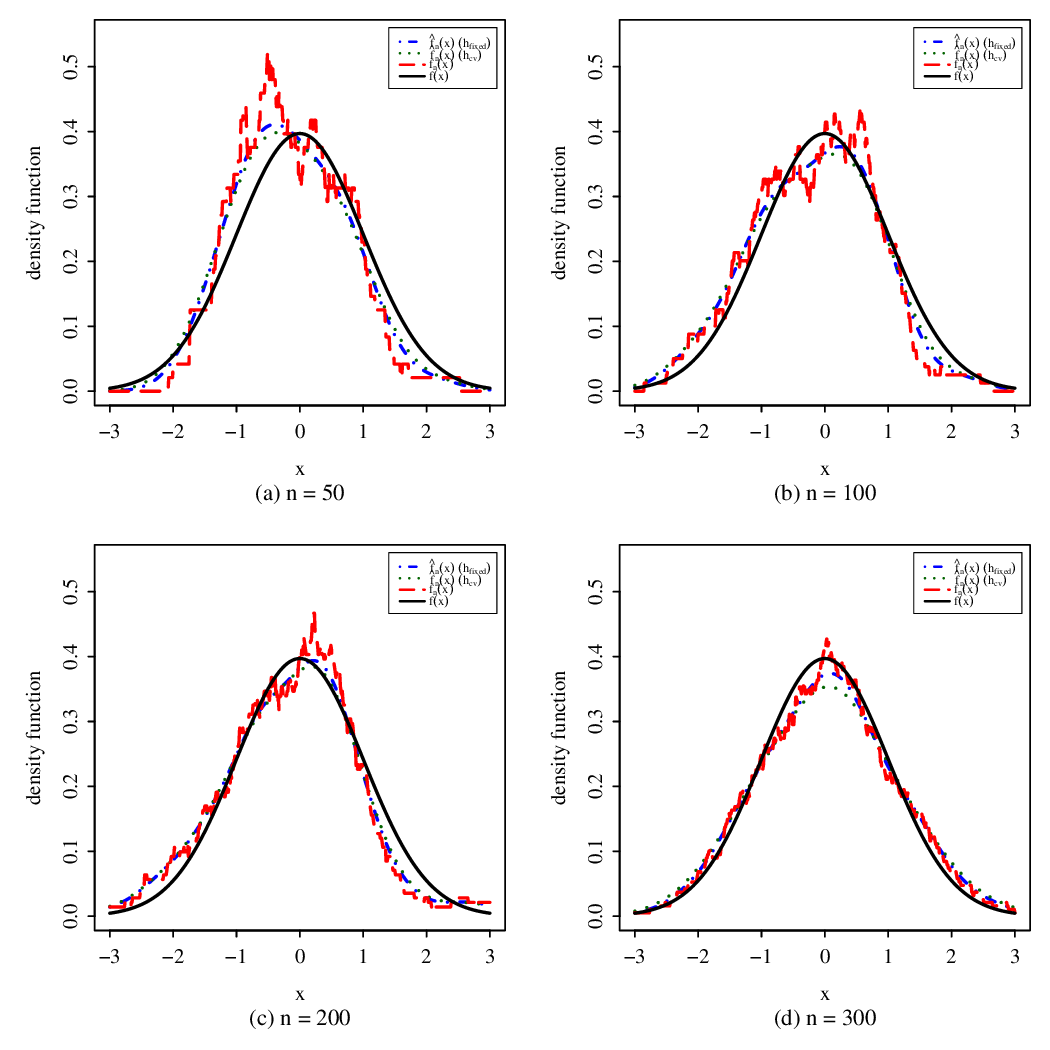}
	\caption{Comparison of histogram density estimator and kernel density estimator with $\rho = 0.1$. (a) $n=50$, (b) $n=100$, (c) $n=200$, and (d) $n=300$.}\label{fig:2}
\end{figure}

In the simulation study, we consider sample sizes $n = 50, 100, 200, 300$ and set $\rho = 0.1$, $m=3$. In fitting the kernel-type estimators, we use a Gaussian kernel and consider two bandwidth choices: a fixed bandwidth $h_{\text{fixed}} = n^{-1/5}$ and a cross-validation (CV) bandwidth $h_{\text{cv}}$.
To highlight the advantage of kernel-type estimators in terms of smoothness, we compare $\hat{F}_n(x)$ with $F_n(x)$ and $\hat{f}_n(x)$ with $f_n(x)$. Figures \ref{fig:1}--\ref{fig:3} illustrate the performance of the fitted curves across different sample sizes.

Specifically, Figure \ref{fig:1} presents the estimation results for  $\widehat{F}_n(x)$ and $F_n(x)$. It can be drawn from Figure \ref{fig:1} that: (i) the fitted curves of $F_n(x)$ are less smooth than those of $\widehat{F}_n(x)$ for a fixed sample size; (ii)  both $F_n(x)$ and $\widehat{F}_n(x)$ converge to the true distribution function $F(x)$ as the sample size $n$ increases; (iii) there is no significant difference in the performance of the kernel distribution estimator between $h_{\text{fixed}}$ and $h_{\text{cv}}$. Furthermore, similar conclusions can be drawn from the performance of $\hat{f}_n(x)$ versus $f_n(x)$ in Figure \ref{fig:2}, and of $\hat{r}_n(x)$ versus $r_n(x)$ in Figure \ref{fig:3}.

\begin{figure}[!ht]
	\centering
	\includegraphics[width=0.9\textwidth]{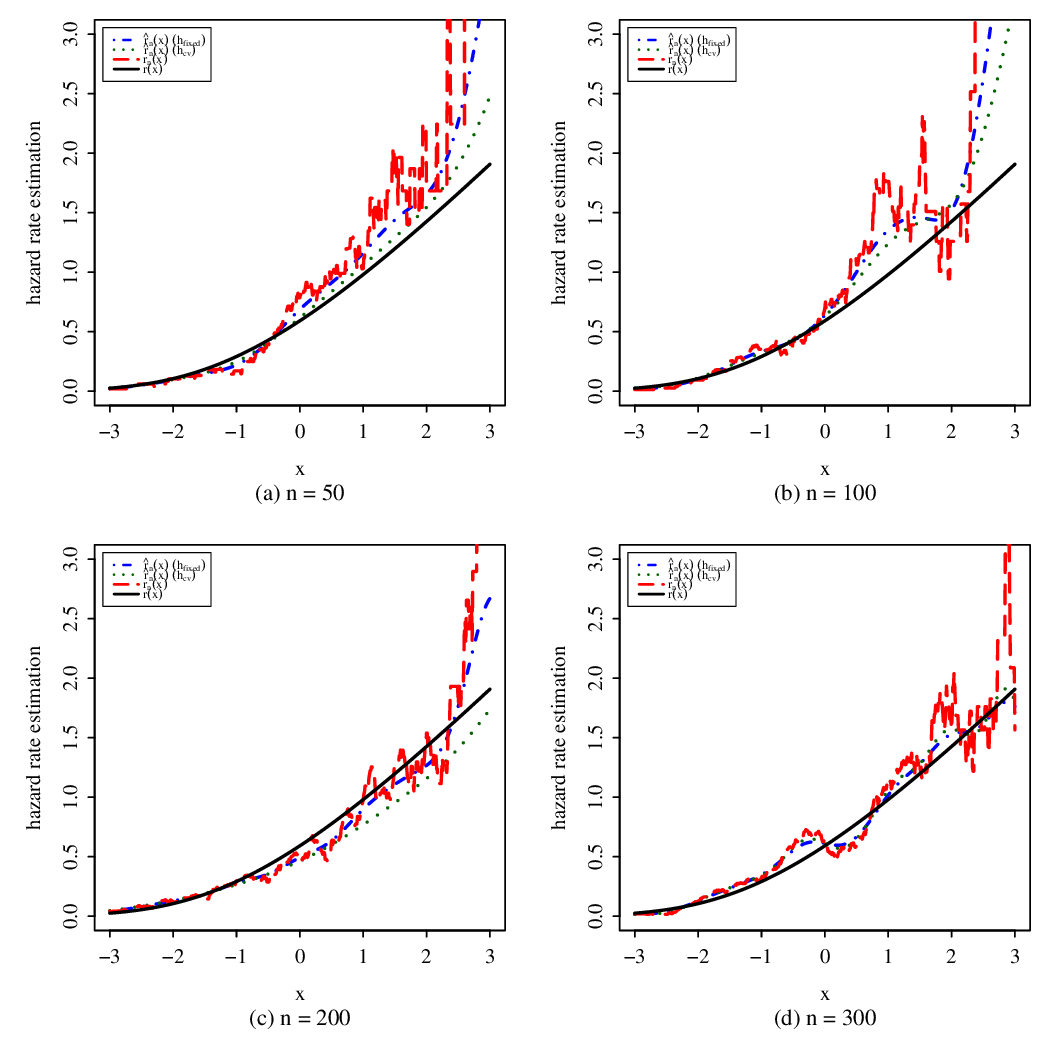}
	\caption{Comparison of histogram density estimator and hazard rate estimator with $\rho = 0.1$. (a) $n=50$, (b) $n=100$, (c) $n=200$, and (d) $n=300$.}\label{fig:3}
\end{figure}

Additionally, we conduct  Monte Carlo simulations with $M=1000$ replications to compute the global mean squared error (GMSE)  of the estimators $\hat{F}_n(x)$, $\hat{f}_n(x)$, and $\hat{r}_n(x)$, as well as their GMSE ratios relative to $F_n(x)$, $f_n(x)$, and $r_n(x)$. The GMSEs of $\hat{F}_n(x)$ and $F_n(x)$ are defined as follows:
\[
\mathrm{GMSE}(\hat{F}_n) = \frac{1}{MH} \sum_{k=1}^{M} \sum_{j=1}^{H}
\left(\hat{F}_{n,k}(x_j) - F(x_j)\right)^2,
\]
and
\[
\mathrm{GMSE}(F_n) = \frac{1}{MH} \sum_{k=1}^{M} \sum_{j=1}^{H}
\left(F_{n,k}(x_j) - F(x_j)\right)^2,
\]
respectively, where $H$ is the number of equidistant grid points $x_j$ in the interval $[-3, 3]$, and $\hat{F}_{n,k}(x_j)$ (or $F_{n,k}(x_j)$) denotes the estimate computed at iteration $k$. The GMSEs of $\hat{f}_n(x)$ and $f_n(x)$ are defined analogously. For the estimators $\hat{r}_n(x)$ and $r_n(x)$, the calculation interval is restricted to $[-3, 1]$ to avoid numerical instability caused by denominators approaching zero. The corresponding numerical results are summarized in Table \ref{table:1}. 

\begin{figure}[!ht]
	\centering
	\includegraphics[width=0.9\textwidth]{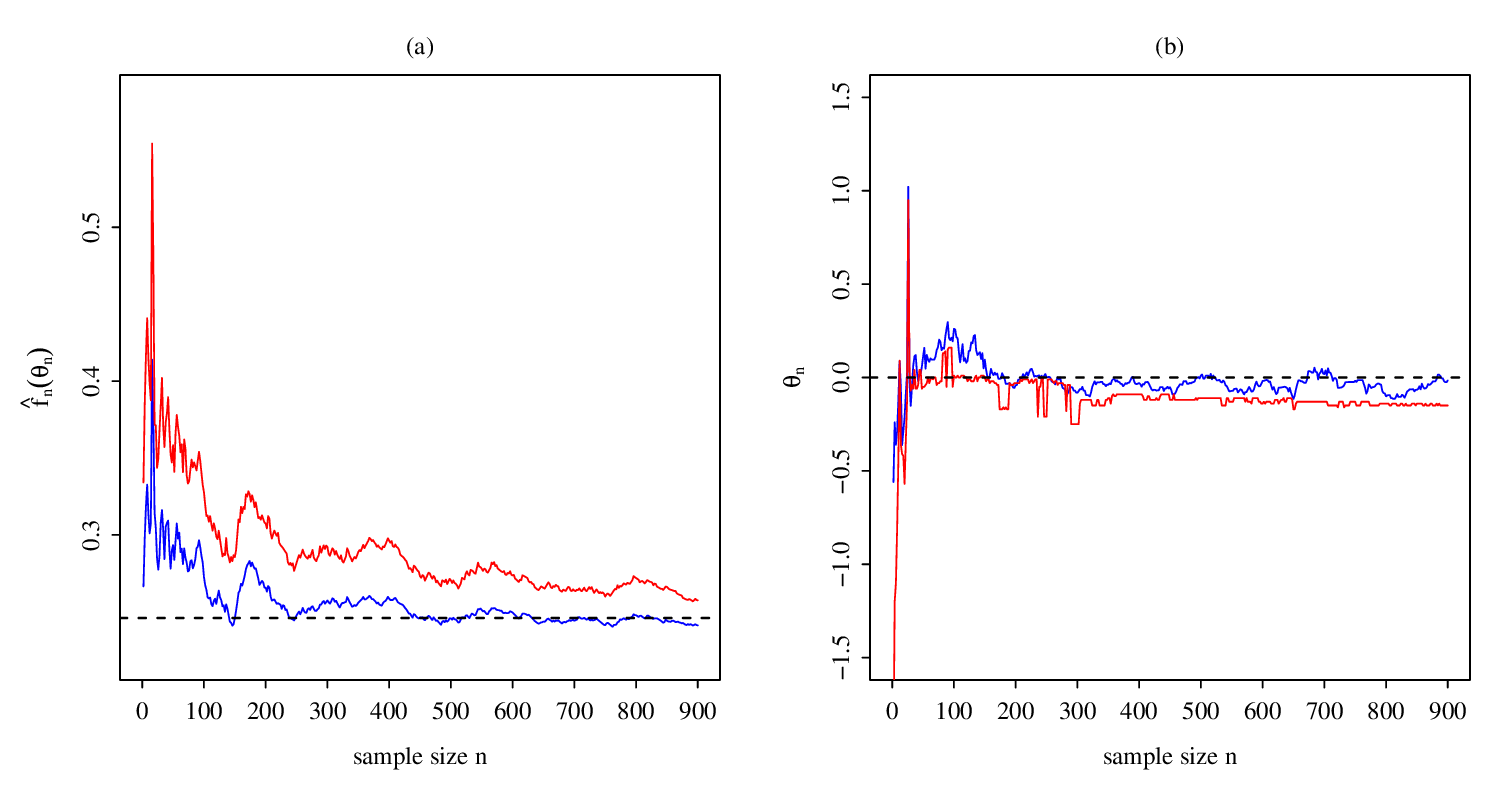}
	\captionsetup{justification=raggedright, singlelinecheck=false}
	\caption{Typical trajectories of $\max \hat{f}_n(x)$ and $\theta_n$ for estimating the mode($\theta$=0).}\label{fig:4}
\end{figure}

As shown in Table \ref{table:1}, the GMSE values of $\hat{F}_n(x)$, $\hat{f}_n(x)$, and $\hat{r}_n(x)$ decrease as the sample size $n$ increases. Moreover, the GMSE ratios between the histogram-type and kernel-type estimators (e.g., $\mathrm{GMSE}(F_n)/\mathrm{GMSE}(\hat{F}_n)$) are greater than 1 in the vast majority of cases, with only a few exceptions being slightly less than 1 but remains very close to it. This indicates that the estimation error of kernel-type estimators is generally smaller than that of their histogram-type counterparts. These results are in good agreement with the conclusions drawn from Figures \ref{fig:1}--\ref{fig:3}. Consequently, we can conclude that the kernel-type estimators $\hat{F}_n(x)$, $\hat{f}_n(x)$, and $\hat{r}_n(x)$ exhibit substantially better estimation performance than the histogram‑type estimators $F_n(x)$, $f_n(x)$, and $r_n(x)$.

Next, to illustrate the consistency of the mode estimator established in Theorem \ref{th:3}, we conduct a simulation for sample sizes $n \in \{2, 3, \dots, 900\}$. The bandwidth $h_{\mathrm{cv}}$ is adopted, while other settings remain the same as before. In Figure \ref{fig:4}, the red lines represent the histogram density estimator, whereas the blue lines denote the kernel density estimator. The figure shows that, as the sample size $n$ increases, the estimated density peak converges toward the true value $f(0)$, while the mode estimators vary around the true mode $0$. Specifically, Figure \ref{fig:4}(a) shows that  $\max_{x\in\mathbb{R}} \hat{f}_n(x)$ and  $\max_{x\in\mathbb{R}} f_n(x)$ rapidly approach $f(0)$ as $n$ increases from 2 to 900, and it is obvious that $X_i \sim N(0, 2.63)$, the density at the true mode is $f(0) = \frac{1}{\sqrt{2 \times 2.63 \pi}} \approx 0.245998$. Figure \ref{fig:4}(b) displays how the two mode estimators converge to the ture mode. It is easily observed that the kernel mode estimator is closer to the true value than its histogra$m$-type counterpart. Overall, Figure \ref{fig:4}(a,b) reveal that the convergence performance of the kernel mode estimator is superior to that of the histogram estimator. These results are in good agreement with the theoretical consistency results of the mode estimator established in Theorem \ref{th:3}.

\begin{table}[ht]
	\caption{Simulation results of GMSE for different estimators}\label{table:1}%
	\centering
	\small
	\setlength{\tabcolsep}{3.5pt}
	\begin{tabular}{@{}ccrrrrrr@{}}
		\toprule
		\multicolumn{2}{c}{} &
		\multicolumn{2}{c}{Distribution estimator} &
		\multicolumn{2}{c}{Density estimator} &
		\multicolumn{2}{c}{Hazard rate estimator} \\
		\cmidrule(lr){3-4}\cmidrule(lr){5-6}\cmidrule(lr){7-8}
		$h_n$ & $n$ &
		$\mathrm{GMSE}(\hat{F}_n)$ &
		$\mathrm{GMSE}\!\left(\frac{F_n}{\hat{F}_n}\right)$ &
		$\mathrm{GMSE}(\hat{f}_n)$ &
		$\mathrm{GMSE}\!\left(\frac{f_n}{\hat{f}_n}\right)$ &
		$\mathrm{GMSE}(\hat{r}_n)$ &
		$\mathrm{GMSE}\!\left(\frac{r_n}{\hat{r}_n}\right)$ \\
		\midrule
		$h_{\mathrm{fixed}}$ & 50  & 0.003934 & 1.159637 & 0.001658 & 1.718394 & 0.013906 & 1.865313 \\
		& 100 & 0.002029 & 1.130425 & 0.000949 & 1.714759 & 0.007152 & 1.730329 \\
		& 200 & 0.001010 & 1.109248 & 0.000548 & 1.717313 & 0.003923 & 1.672252 \\
		& 300 & 0.000761 & 1.086141 & 0.000405 & 1.685564 & 0.002868 & 1.608616 \\
		\midrule
		$h_{\mathrm{cv}}$ & 50  & 0.004187 & 1.089483 & 0.001656 & 1.525437 & 0.013623 & 1.789694 \\
		& 100 & 0.002181 & 1.051950 & 0.000932 & 1.524854 & 0.007144 & 1.574964 \\
		& 200 & 0.001106 & 1.012650 & 0.000523 & 1.496469 & 0.003969 & 1.448807 \\
		& 300 & 0.000828 & 0.997646 & 0.000367 & 1.453299 & 0.002848 & 1.357993 \\
		\bottomrule
	\end{tabular}
\end{table}

\section{Real data analysis}

To demonstrate the practical relevance of this study, we apply our proposed estimators to the \texttt{gtemp\_ocean} dataset (Global Mean Ocean Surface Temperature Deviations, 1850--2023) for an empirical analysis. This dataset, available in the \texttt{astsa} package in R, records the annual sea surface temperature anomalies of the global open ocean from 1850 to 2023. The data are measured in degrees Celsius, and all anomalies are calculated relative to the 1991--2020 average temperature baseline. Specifically, the mean temperature over the 1991–2020 period is subtracted from each observation, so that the anomaly series has a mean of zero;  consequently, positive and negative values indicate deviations above and below this reference level, respectively.

First, a preliminary analysis of the \texttt{gtemp\_ocean} dataset is conducted. The time series plot, along with its autocorrelation function (ACF) and partial autocorrelation function (PACF), is displayed in Figure \ref{fig:5}. As observed from the time series plot in Figure \ref{fig:5}(a), the sequence exhibits a significant linear upward trend beginning around 1900, indicating that the series is not stationary.

\begin{figure}[!ht]
	\centering
	\includegraphics[width=0.9\textwidth]{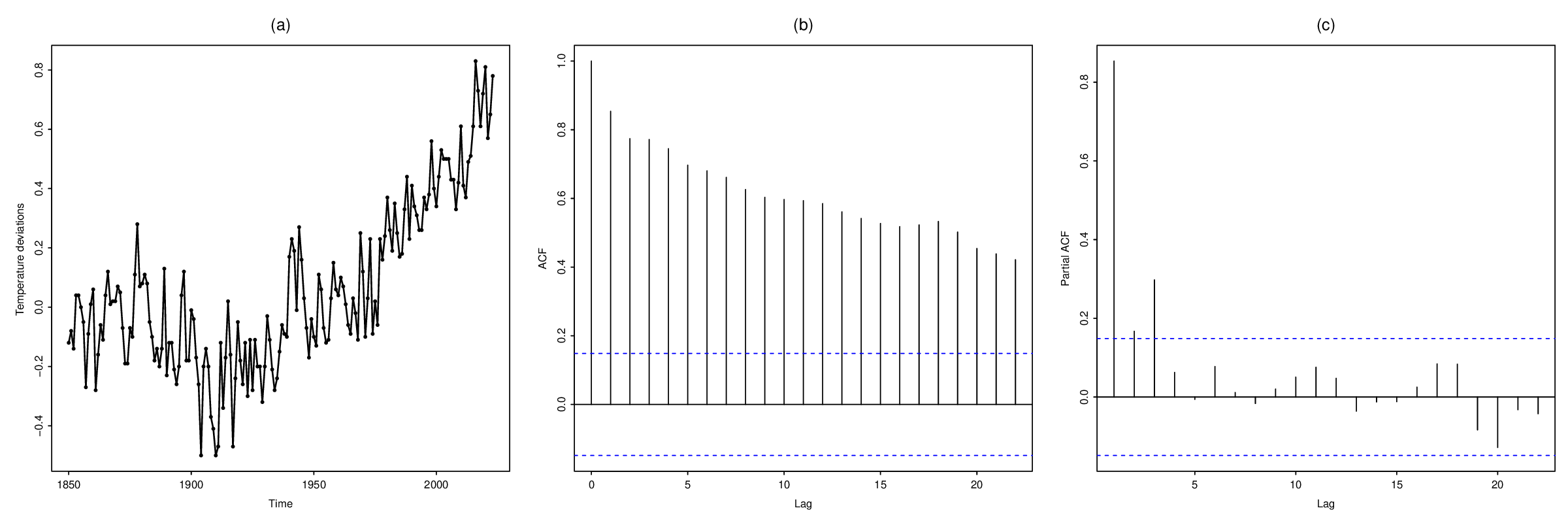}
	\caption{(a) Times series of global mean ocean surface temperature deviations from 1850 to 2023, (b) Autocorrelation function, (c) Partial autocorrelation function.}\label{fig:5}
\end{figure}

\begin{figure}[!ht]
	\centering
	\includegraphics[width=0.9\textwidth]{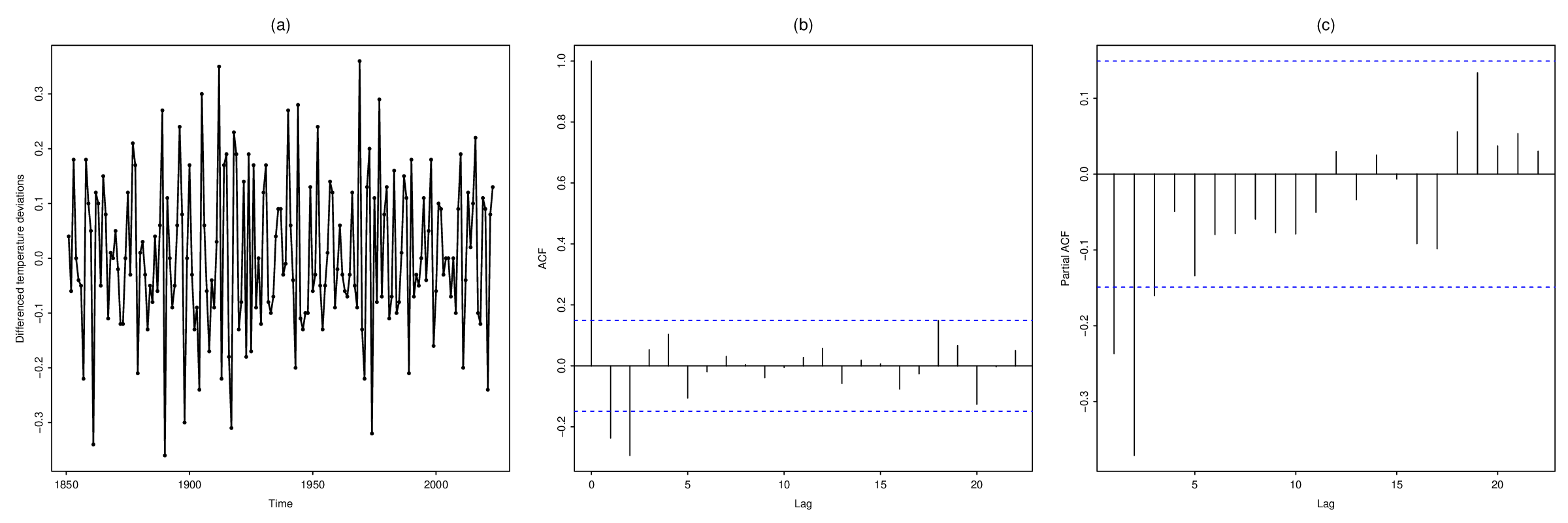}
	\caption{(a) Differences ($y_t$) of Basel yearly temperature series, (b) Autocorrelation function, (c) Partial autocorrelation function.}\label{fig:6}
\end{figure}
\begin{figure}[!ht]
	\centering
	\includegraphics[width=0.9\textwidth]{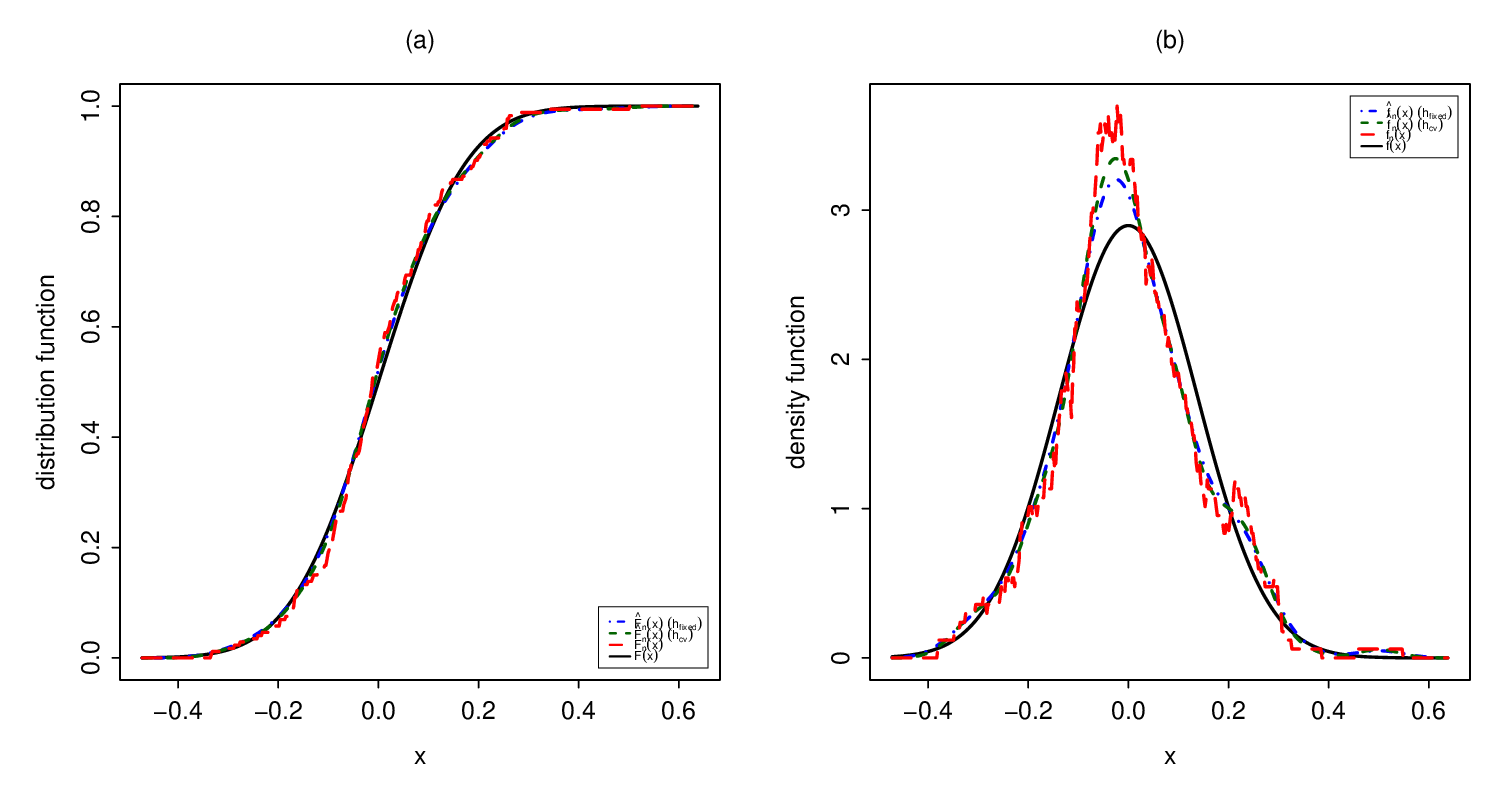}
	\caption{(a) Distribution function estimators and (b) density function estimators for the \texttt{gtemp\_ocean} dataset.}\label{fig:7}
\end{figure}

Based on this observation, to ensure the accuracy and reliability of the model fitting, we apply first-order differencing to the original sequence to construct the differenced series $y_t = x_t - x_{t-1}$. The time series plot, ACF, and PACF of the differenced sequence are shown in Figure \ref{fig:6}. It is evident that the sequence achieves stationarity after first-order differencing. To further verify this, the Augmented Dickey-Fuller (ADF) test is employed, and the null hypothesis of a unit root is rejected, confirming the stationarity of the series. Subsequently, the trend of the differenced series $y_t$ and its ACF and PACF in Figure \ref{fig:6} indicate that the differenced process is stationary. Accordingly, the moving average model of order 2, MA(2), is considered for fitting the differenced data 
$
	y_t = Z_t + \theta_1 Z_{t-1} + \theta_2 Z_{t-2},	
$
where $\{Z_t, t \geq 1\}$ is a Gaussian white noise sequence. The estimated parameters are $\hat{\theta}_1 = -0.4066$, $\hat{\theta}_2 = -0.2934$, $\hat{\mu}_Z = 0$, and $\hat{\sigma}_Z^2 = 0.01516$. Moreover, after comparing several candidate models, including the autoregressive model of order 3, AR(3), the MA(2) model is selected as the optimal model according to the Akaike information criterion (AIC).

To validate the model, the residuals of the MA(2) model are firstly tested for normality using the Jarque–Bera test, which yields a $p$‑value of 0.8585, suggesting that the residuals follow a normal distribution. Subsequently, the Ljung–Box test is employed to examine the autocorrelation of the residuals. Since the $p$‑values at various lags are all greater than 0.05, we fail to reject the null hypothesis of “no autocorrelation” at the 5\% significance level, indicating that the residual sequence is White noise. These tests confirm the effectiveness of the MA(2) model for fitting the differenced series. Since linear transformations or combinations of multivariate normal variables remain multivariate normal, difference of real data follows a multivariate normal distribution with mean vector zero and a covariance matrix $\varSigma_2$ given by
\[\varSigma_2 =
\begin{pmatrix}
	\rho_0\sigma_Z^2 & \rho_1\sigma_Z^2 & \rho_2\sigma_Z^2 & \cdots & 0 & 0 & 0 \\
	\rho_1\sigma_Z^2 & \rho_0\sigma_Z^2 & \rho_1\sigma_Z^2 & \cdots & 0 & 0 & 0 \\
	\rho_2\sigma_Z^2 & \rho_1\sigma_Z^2 & \rho_0\sigma_Z^2 & \cdots & 0 & 0 & 0 \\
	\vdots & \vdots & \vdots & & \vdots & \vdots & \vdots \\
	0 & 0 & 0 & \cdots & \rho_0\sigma_Z^2 & \rho_1\sigma_Z^2 & \rho_2\sigma_Z^2 \\
	0 & 0 & 0 & \cdots & \rho_1\sigma_Z^2 & \rho_0\sigma_Z^2 & \rho_1\sigma_Z^2 \\
	0 & 0 & 0 & \cdots & \rho_2\sigma_Z^2 & \rho_1\sigma_Z^2 & \rho_0\sigma_Z^2
\end{pmatrix}_{n \times n},
\]
where $\rho_0 = 1 + \theta_1^2 + \theta_2^2$, $\rho_1 = \theta_1(1 + \theta_2)$, and $\rho_2 = \theta_2$. According to the result of Joag-Dev and Proschan (1983), a multivariate normal distribution is NA if the off-diagonal elements of its covariance matrix are non-positive. Given $\hat{\theta}_1 = -0.4066$ and $\hat{\theta}_2 = -0.2934$, the resulting sequence is NA. Therefore, the differenced \texttt{gtemp\_ocean} data satisfy the $m$-LENQD property, as NA sequence is a special case of $m$-LENQD sequence.

Finally, using both the fixed bandwidth $h_{\text{fixed}}$ and the cross-validation bandwidth $h_{\text{cv}}$, we apply a Gaussian kernel to estimate the distribution and density function of the \texttt{gtemp\_ocean} data. For comparison, the histogram method is also included. Figure \ref{fig:7} shows that both methods fit the distribution function comparably, whereas the kernel method performs better in density estimation. For the \texttt{gtemp\_ocean} dataset, the proposed kernel-type estimators yield smoother fits and better capture the peak structure of the anomaly distribution than the histogram estimator.


\section{Conclusion}
In this paper, we study the uniform strong convergence rates of the kernel distribution estimator $\hat F_n(x)$ and the kernel density estimator $\hat f_n(x)$ for $m$-LENQD random variables under mild conditions. As applications, we further obtain the uniform strong convergence rate of the hazard rate estimator $\hat r_n(x)$ and the strong consistency rate of the mode estimator $\theta_n$. These results sharpen related complete and uniform convergence results in the existing literature.

Numerical simulations confirm the theoretical findings. Compared with histogram-type estimators, the proposed kernel-type estimators show clear advantages in accuracy and smoothness, especially for estimating the hazard rate function $r(x)$. A real data example further illustrates that the proposed estimators provide smoother and more satisfactory fits to the underlying distribution and density. 

The results obtained in this paper can be extended to other nonparametric density estimation models for dependent sequences, such as frequency polygon estimators and nearest‑neighbor estimators. A further direction for future research would be to investigate the convergence properties of kernel density estimators for \(m\)-LENQD sequences under censored or truncated data, like in Shi et al. (2024) and El Alem et al. (2026).  

\section*{Disclosure Statement}
	
No potential conflict of interest was reported by the author(s).
	
\section*{Funding}
	
This work was supported by the Natural Science Foundation of Fujian Province (No. 2025J01969) and by the Key Science and Technology Plan Project of Zhangzhou City (No. zz20250012) and Fujian Key Laboratory of Granular Computing and Applications (No.2025L0704), China.

	
\section*{References}
\begin{description} \itemsep=-\parsep \itemindent=-1.3cm	
\item[ ]Cai, Z., and G. G. Roussas. 1992. Uniform strong estimation under $\alpha$-mixing, with rates. \textit{Statistics \& Probability Letters} 15 (1):47--55.

\item[ ]Cai, Z. 1993. Uniform strong convergence and rates for the kernel estimators of a distribution function and a regression function under weakly dependent assumptions. \textit{Chinese Journal of Applied Probability and Statistics} 9 (1):11--17.

\item[ ]Cui, Y. J., S. C. Yang, and D. Liang. 2012. Consistency of nearest neighbor estimation of density function for linearly negative quadrant dependent samples. \textit{Journal of Guangxi Normal University (Natural Science Edition)} 30 (2):59--65.

\item[ ]El Alem, M. K., Z. Guessoum, and A. Tatachak. 2026. On kernel mode estimation under RLT and WOD model. \textit{Communications in Statistics-Theory and Methods} 55 (10):2952--70.

\item[ ]El Alem, M. K. 2026. Some results on $m$-linearly extended negative quadrant dependent random variables. \textit{Communications in Statistics-Theory and Methods} 55 (2):434--46.

\item[ ]Hwang, E., and D. W. Shin. 2016. Kernel estimators of mode under $\psi$-weak dependence. \textit{Annals of the Institute of Statistical Mathematics} 68 (2):301--27.

\item[ ]Joag‑Dev, K., and F. Proschan. 1983. Negative association of random variables with applications. \textit{The Annals of Statistics} 11 (1):286--8.

\item[ ]Konakov, V. D. 1973. On asymptotic normality of the sample mode of multivariate distributions. \textit{Teoriya Veroyatnostei i ee Primeneniya} 18 (4):836--42.

\item[ ]Khardani, S., M. Lemdani, and E. Ould Saïd. 2012. On the strong uniform consistency of the mode estimator for censored time series. \textit{Metrika} 75 (2):229--41.

\item[ ]Li, Y. M. 2017. On the rate of strong convergence for a recursive probability density estimator of END samples and its applications. \textit{J. Math. Inequal.} 11 (2):335--43.

\item[ ]Ling, N. X., C. M. Xu, and X. Z. Peng. 2008. Consistency of the kernel estimation of the density function for pairwise negative quadrant dependent sequences. \textit{Journal of Hefei University of Technology (Natural Science Edition)} 31 (2):287--90.

\item[ ]Li, Y. M., and S. C. Yang. 2005. Strong convergence rate of recursive probability density estimator for NA sequences. \textit{Chinese Journal of Engineering Mathematics} 22 (4):659--65.

\item[ ]Liu, R., and L. Yang. 2008. Kernel estimation of multivariate cumulative distribution function. \textit{Journal of Nonparametric Statistics} 20 (8):661--77.

\item[ ]Li, Y. M., Y. Zhou, and C. Liu. 2018. On the convergence rates of kernel estimator and hazard estimator for widely dependent samples. \textit{Journal of Inequalities and Applications} 2018 (1):71.

\item[ ]Li, Y. M., W. C. Pang, Z. Q. Feng, and N. Li. 2023. On the linearly extended negative quadrant dependent random variables and its inequalities. \textit{Communications in Statistics-Theory and Methods} 52 (24):8696--711.

\item[ ]Li, Y. M., W. C. Pang, Z. Q. Feng, and N. Li. 2024. Asymptotic properties of Kaplan–Meier estimator and hazard estimator for censored survival time with LENQD data. \textit{Statistical Theory and Related Fields} 8 (2):107--16.

\item[ ]Parzen, E. 1962. On estimation of a probability density function and mode. \textit{The Annals of Mathematical Statistics} 33 (3):1065--76.

\item[ ]Romano, J. P. 1988. On weak convergence and optimality of kernel density estimates of the mode. \textit{The Annals of Statistics} 16 (2):629--47.

\item[ ]Roussas, G. G. 1995. Asymptotic normality of a smooth estimate of a random field distribution function under association. \textit{Statistics \& Probability Letters} 24 (1):77--90.

\item[ ]Shi, J. H., J. S. Xu, and J. F. Xu. 2024. Strong asymptotic properties of kernel smoothing estimation for NA random variables with right censoring. \textit{Communications in Statistics-Theory and Methods} 53 (12):4531--41.

\item[ ]Wang, W., Q. Q. Wu, and X. F. Tang. 2022. Kernel density estimation under widely orthant dependence. \textit{Journal of University of Science and Technology of China} 52 (2):2.

\item[ ]Wang, W., H. Wang, and Y. Wu. 2025. Consistency of the frequency polygon estimators of density function and density mode under $m$-END samples. \textit{Communications in Statistics-Theory and Methods} 54 (5):1271--86.

\item[]Wang, S., M. M. Wang, W.  T. Wan, D. Yuan, and X. Q. Zheng. 2025. Convergence properties of the kernel‑type density estimator under $m$-WOD dependent samples. \textit{Communications in Statistics‑Simulation and Computation} 1--13. 

\item[ ]Wu, S. P., Y. Wu, W. Z. Yang, and X. J. Wang. 2025. The uniformly strong consistency of kernel‑type distribution estimator under asymptotically almost negatively associated samples. \textit{Statistical Theory and Related Fields} 9 (2):124--40.

\item[ ]Yamato, H. 1973. Uniform convergence of an estimator of a distribution function. \textit{Bulletin of Mathematical Statistics} 15 (3‑4):69--78.

\item[ ]Yang, S. C. 2003. Consistency of nearest neighbor estimator of density function for negative associated samples. \textit{Acta Mathematicae Applicatae Sinica (English Series)} 26 (3):385--94.

\end{description}

\appendix
\section*{Appendix A}

\setcounter{lemma}{0}
\renewcommand{\thelemma}{A.\arabic{lemma}}
\renewcommand{\theequation}{A.\arabic{equation}}

Before presenting the proofs of the main results, we introduce several fundamental lemmas.

\begin{lemma}(El Alem (2026))\label{lem:1}
Let $\{X_n, n\ge 1\}$ be a sequence of $m$-LENQD random variables. If $\{f_n(\cdot), n\ge 1\}$ are all non‑decreasing (or  non‑increasing) functions, then $\{f_n(X_n), n\ge 1\}$ are still $m$-LENQD random variables with the same dominating constant $M$.
\end{lemma}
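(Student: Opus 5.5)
The plan is to reduce the statement to the corresponding closure property for LENQD sequences, applied along each index set with gaps at least $m$. Fix $n\ge 2$ and indices $i_1,\dots,i_n$ with $|i_k-i_j|\ge m$ for $k\ne j$. By the definition of $m$-LENQD, the vector $X_{i_1},\dots,X_{i_n}$ is LENQD with dominating constant $M$. It therefore suffices to show the following. If a finite family $Y_1,\dots,Y_n$ is LENQD with constant $M$, and $g_1,\dots,g_n$ are all non-decreasing (or all non-increasing), then $g_1(Y_1),\dots,g_n(Y_n)$ is LENQD with the same $M$. Applying this with $Y_k=X_{i_k}$ and $g_k=f_{i_k}$ then gives exactly the $m$-LENQD property of $\{f_n(X_n)\}$, because the index sets involved are the same.

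For the LENQD closure I would argue as in Li et al. (2023), where this is proved for monotone transformations of LENQD sequences. Take disjoint $A,B$ and nonnegative coefficients $s_i$; nonpositive coefficients follow by replacing each $g_i$ with $-g_i$ and swapping the roles of the two ENQD inequalities. We must show that
\[
U=\sum_{i\in A}s_i g_i(Y_i),\qquad V=\sum_{j\in B}s_j g_j(Y_j)
\]
satisfy both ENQD inequalities with constant $M$.

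\textbf{Single-variable case.} When $|A|=|B|=1$, monotonicity gives event identities. For non-decreasing $g$, the event $\{g(Y)\le u\}$ equals $\{Y\le y\}$ or $\{Y<y\}$ for a suitable $y$. The $\le$ case is immediate from the ENQD inequality. The strict case follows by monotone limits $Y\le y-1/k$, which preserve the inequality with the same $M$.

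\textbf{Sums.} For genuine sums the event identities are no longer available, and this is the main obstacle. I would first approximate each $g_i$ by step functions. A monotone step function is a finite nonnegative combination of indicators $I(Y_i>a)$, so sums of such functions become sums of indicator variables. The ENQD inequalities are preserved under monotone (pointwise) limits of such approximations, which then extends the result from step functions to the $g_i$.

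The genuinely delicate point is that the definition of LENQD only controls linear combinations of the original $Y_i$, not of the indicators $I(Y_i>a)$. If the direct approximation argument fails, I would rely on the result stated in El Alem (2026), which is how the paper itself cites this lemma. The structural proof there reduces to the LENQD case, exactly as in the first step, so the $m$-gap reduction is the only new ingredient here.
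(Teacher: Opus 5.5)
Your reduction from $m$-LENQD to LENQD is correct, and so is your pairwise case $|A|=|B|=1$. The gap is the step for genuine sums, and it is not a technicality. The LENQD hypothesis controls only half-space events $\{\sum_{i\in A}s_iY_i\le u\}$. You need the ENQD inequalities for events $\{\sum_{i\in A}s_ig_i(Y_i)\le u\}$, which are general monotone sets in $\mathbb{R}^{|A|}$; with indicator $g_i$ they include orthants. Replacing $g_i$ by step functions only turns the question into one about sums of indicators $I(Y_i>a)$, on which the hypothesis says nothing. Monotone limits cannot supply information that is missing at the step-function level. Falling back on El Alem (2026) is circular, since that is the statement itself.

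In fact no argument can close this step: the statement fails already for $m=1$ and $M=1$. Let $(X_1,X_2,X_3)$ equal $(1,1,1)$ with probability $0.1$ and each of $(2,0,0)$, $(0,2,0)$, $(0,0,2)$ with probability $0.3$, and set $X_n\equiv 0$ for $n\ge 4$. Three reductions make the check short. When $M=1$ the two ENQD inequalities coincide by inclusion--exclusion. All-negative coefficients reduce to all-positive ones, because $(U,V)\mapsto(-U,-V)$ preserves NQD. By exchangeability it is enough to check $(X_1,W)$ with $W=r_2X_2+r_3X_3$, $r_2>0$, $r_3\ge 0$.

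Now check the cases. For $x\in[1,2)$ the event $\{X_1>x\}$ is the atom $(2,0,0)$, where $W=0$, so the joint probability vanishes for $w\ge 0$. For $x\in[0,1)$ and $0\le w<r_2+r_3$, one has $P(X_1>x,W>w)=0.1$ and $P(W>w)\ge 0.4$. The latter holds because $r_2+r_3\le 2\max(r_2,r_3)$ forces one of the atoms $(0,2,0)$, $(0,0,2)$ into $\{W>w\}$. Hence $0.1\le 0.4\cdot 0.4$. The remaining cases are trivial, so the triple is LENQD with $M=1$.

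Yet for the nondecreasing $f(x)=I(x\ge 1)$ and $Y_i=f(X_i)$,
\[
P\bigl(Y_1>\tfrac12,\;Y_2+Y_3>\tfrac32\bigr)=0.1>0.04=P\bigl(Y_1>\tfrac12\bigr)\,P\bigl(Y_2+Y_3>\tfrac32\bigr),
\]
so $\{f(X_n)\}$ is not LENQD with the same $M$. What survives is closure under increasing affine maps and the pairwise statement. In particular, the application of this lemma to indicators in the proof of Lemma A.3 needs a different justification.
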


\begin{lemma}(El Alem (2026))\label{lem:3}
	Let $\{X_n, n \ge 1\}$ be a sequence of $m$-LENQD random variables with $E(X_i) = 0$ and $|X_i| \le b$ for each $i \ge 1$, where $b$ is a positive constant. Then, for each $n$ and any $\epsilon>0$, set $B_n^2 = \sum_{i=1}^n EX_i^2$, there exists a dominating constant $M\ge 1$, such that 
	
	\begin{equation}\label{eqn-12'} 
		P\left(\left|\sum_{i=1}^n X_i\right| > \epsilon\right) \le 2mM \exp\left\{-\frac{(\epsilon/m)^2}{2B_n^2 + (2/3)b(\epsilon/m)}\right\}. 
	\end{equation}
\end{lemma}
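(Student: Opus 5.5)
The plan is to reduce the $m$-LENQD case to the ordinary LENQD case by splitting the sum into $m$ interleaved blocks, and then to apply a Bernstein-type exponential bound to each block.

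\textbf{Step 1 (splitting).} For $j=1,\dots,m$ put $T_j=\sum_{1\le i\le n,\ i\equiv j \ (\mathrm{mod}\ m)} X_i$, so that $\sum_{i=1}^n X_i=\sum_{j=1}^m T_j$. Within a fixed residue class any two indices differ by at least $m$. By the definition of $m$-LENQD, the summands of each $T_j$ therefore form an LENQD sequence with dominating constant $M$. If $|\sum_{i=1}^n X_i|>\epsilon$, then $|T_j|>\epsilon/m$ for at least one $j$, and hence
\[
P\Big(\Big|\sum_{i=1}^n X_i\Big|>\epsilon\Big)\le \sum_{j=1}^m P(|T_j|>\epsilon/m).
\]
Let $B_{(j)}^2$ denote the sum of $EX_i^2$ over the $j$-th class. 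Since $B_{(j)}^2\le B_n^2$ and the right-hand side of \eqref{eqn-12'} is increasing in $B_n^2$, it suffices to prove, for an LENQD sequence $Y_1,\dots,Y_k$ with $EY_i=0$, $|Y_i|\le b$ and $t>0$,
\[
P\Big(\Big|\sum Y_i\Big|>t\Big)\le 2M\exp\Big\{-\frac{t^2}{2\sum EY_i^2+(2/3)bt}\Big\}.
\]
Summing this bound over the $m$ classes gives the factor $2mM$.

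\textbf{Step 2 (one-sided bound via the moment generating function).} Fix $\lambda>0$. The one-sided tail follows from Markov's inequality once we have the factorization
\[
Ee^{\lambda\sum Y_i}\le M\prod_i Ee^{\lambda Y_i}.
\]
I expect this factorization, with a single factor $M$ rather than $M^{k-1}$, to be the main obstacle. Naively iterating the two-block ENQD inequality $E[f(U)g(V)]\le M\,Ef(U)\,Eg(V)$ (for nonnegative nondecreasing $f,g$) would accumulate powers of $M$. My first attempt will be to invoke the exponential-moment inequality for LENQD sequences of Li et al.\ (2023), whose linear-extension structure (ENQD of $(\sum_{A}s_iX_i,\sum_B s_jX_j)$ with the same $M$) is designed for exactly this purpose. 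If that inequality is not available in the needed form, the fallback is to apply the ENQD property directly to the sets $A=\{1,\dots,k-1\}$ and $B=\{k\}$ with positive coefficients, and to check that the Hoeffding covariance-type identity for $e^{\lambda U}e^{\lambda V}$ yields the factor $M$ in a way that telescopes without compounding.

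\textbf{Step 3 (Bernstein estimate and two-sided tail).} For bounded, centered $Y_i$ and $0<\lambda<3/b$, the standard series bound gives
\[
Ee^{\lambda Y_i}\le \exp\Big\{\frac{\lambda^2EY_i^2}{2(1-\lambda b/3)}\Big\}.
\]
Hence
\[
P\Big(\sum Y_i>t\Big)\le M\exp\Big\{-\lambda t+\frac{\lambda^2B^2}{2(1-\lambda b/3)}\Big\},\qquad B^2=\sum EY_i^2,
\]
and the choice $\lambda=t/(B^2+bt/3)$ yields $M\exp\{-t^2/(2B^2+2bt/3)\}$. For the lower tail, apply the same argument to $-Y_i$. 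These variables are still LENQD with constant $M$ by Lemma \ref{lem:1}, using the non-increasing maps $f_i(x)=-x$; they are centered, bounded by $b$, and have the same variances. Adding the two tails gives the factor $2M$, and Step 1 then completes \eqref{eqn-12'}.
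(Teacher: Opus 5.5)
Your Steps 1 and 3 are fine. The residue-class split (each class is an LENQD family by definition), the union bound with $B_{(j)}^2\le B_n^2$, the reflection $Y_i\mapsto -Y_i$ and the Bernstein optimisation at $\lambda=t/(B^2+bt/3)$ are all standard. This is the natural route and presumably that of El Alem (2026); the paper gives no proof of its own, only the citation.

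Everything, however, rests on Step 2, which you leave open, and your fallback cannot close it. Write $S_{j-1}=\sum_{i<j}Y_i$ and apply the ENQD property to the pair $(S_{j-1},Y_j)$. You can do this via $E f(U)g(V)\le M\,Ef(U)\,Eg(V)$, or via the Hoeffding identity, which only yields $\mathrm{Cov}(e^{\lambda S_{j-1}},e^{\lambda Y_j})\le (M-1)\,Ee^{\lambda S_{j-1}}\,Ee^{\lambda Y_j}$, again proportional to the unknown $Ee^{\lambda S_{j-1}}$. Either way the ratio $Ee^{\lambda S_j}/\prod_{i\le j}Ee^{\lambda Y_i}$ can grow by a factor $M$ at each step. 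So you only get $M^{k-1}$ with $k\approx n/m$, which destroys the bound. Nothing telescopes, because the LENQD definition constrains one pair of linear combinations at a time and says nothing about the joint law of three or more blocks.

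In fact the single-$M$ factorisation is false, and so is the lemma under the definition used in the paper. Let $\xi$ be a Rademacher variable and set $X_i=\xi$ for every $i$. For disjoint $A,B$ and weights of one sign, $(\sum_A s_iX_i,\sum_B s_jX_j)=(\alpha\xi,\beta\xi)$ with $\alpha\beta>0$. The two coordinates are comonotone, so each joint orthant probability equals $\min(a,b)$, where $a,b\in\{0,\tfrac12,1\}$ are the marginal ones. Since $\min(a,b)\le 2ab$ always holds on this set, both ENQD inequalities hold. Hence $\{X_i\}$ is LENQD, and therefore $m$-LENQD for every $m$, with $M=2$; moreover $EX_i=0$, $|X_i|\le1$ and $B_n^2=n$.

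Now compute both sides. The ratio is $Ee^{\lambda S_k}/\prod_{i\le k}Ee^{\lambda X_i}=\cosh(k\lambda)/\cosh^k\lambda\to 2^{k-1}=M^{k-1}$ as $\lambda\to\infty$, so the compounded constant is actually attained. With $\epsilon=n/2$, the left side of the lemma equals $1$, while the right side is $2mM\exp\{-n/(8m^2+4m/3)\}\to0$ for any $M$ not depending on $n$. The same failure occurs for any sequence that is conditionally i.i.d.\ given a fair coin, including ones with smooth conditional laws, so this is not an artefact of degeneracy.

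Thus no appeal to Li et al.\ (2023) or El Alem (2026) can rescue Step 2. The result needs a hypothesis that controls each whole residue class with a single constant. One option is $M=1$, where your iteration costs nothing and your proof is complete. Another is an END-type condition $Ee^{\lambda\sum_{i\in I}Y_i}\le M\prod_{i\in I}Ee^{\lambda Y_i}$ on each class $I$, under which your Steps 1--3 go through verbatim.
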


\begin{lemma}\label{lem:4}
	Let $\{X_n, n \geq 1\}$ be a sequence of $m$-LENQD random variables with unknown distribution function $F(x)$ and bounded probability density function $f(x)$. Then the empirical distribution function $F_n(x)$ can be constructed by
	\begin{equation}\label{eqn-16} 
		\sup_x |F_n(x) - F(x)| = O\left(\sqrt{\frac{\log n}{n}}\right), \quad \text{a.s.}
	\end{equation}
\end{lemma}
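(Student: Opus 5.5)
We prove Lemma \ref{lem:4} by a standard grid (chaining) argument combined with the Bernstein-type inequality of Lemma \ref{lem:3}. The monotonicity of $F_n$ and $F$ reduces the supremum over $\mathbb{R}$ to a maximum over roughly $n^{1/2}$ points. At each grid point we have a sum of bounded, centered $m$-LENQD variables. The exponential bound, together with Borel--Cantelli, then gives the rate $\sqrt{\log n/n}$ almost surely.

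\textbf{Step 1 (grid reduction).} Set $\varepsilon_n=C_0\sqrt{\log n/n}$ for a constant $C_0$ to be chosen, and let $N_n=\lceil 1/\varepsilon_n\rceil=O(\sqrt{n/\log n})$. Choose points $-\infty=x_{n,0}<x_{n,1}<\dots<x_{n,N_n}=+\infty$ with $F(x_{n,j})-F(x_{n,j-1})\le \varepsilon_n$. Since $F$ is continuous (it has a bounded density), such points exist by taking quantiles of $F$. For $x\in[x_{n,j-1},x_{n,j})$, monotonicity gives
\[
F_n(x)-F(x)\le F_n(x_{n,j})-F(x_{n,j})+\varepsilon_n,\qquad F_n(x)-F(x)\ge F_n(x_{n,j-1})-F(x_{n,j-1})-\varepsilon_n .
\]
Therefore
\[
\sup_x|F_n(x)-F(x)|\le \max_{0\le j\le N_n}|F_n(x_{n,j})-F(x_{n,j})|+\varepsilon_n .
\]

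\textbf{Step 2 (exponential bound at a fixed point).} Fix $x$ and put $\xi_i=I(X_i\le x)-F(x)$. The map $t\mapsto I(t\le x)$ is non-increasing, so by Lemma \ref{lem:1} the variables $\{I(X_i\le x)\}$ are $m$-LENQD with the same constant $M$. Subtracting the constant $F(x)$ preserves this, so $\{\xi_i\}$ is $m$-LENQD. Moreover $E\xi_i=0$, $|\xi_i|\le 1$, and $B_n^2\le n/4$. Lemma \ref{lem:3} with $\epsilon=n\varepsilon_n$ yields
\[
P\bigl(|F_n(x)-F(x)|>\varepsilon_n\bigr)\le 2mM\exp\Bigl\{-\frac{n^2\varepsilon_n^2/m^2}{n/2+(2/3)n\varepsilon_n/m}\Bigr\}\le 2mM\exp\Bigl\{-\frac{C_0^2\log n}{m^2(1/2+o(1))}\Bigr\}.
\]
This is $\le 2mM\, n^{-C_0^2/m^2}$ for large $n$. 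The dominating constant $M$ and the integer $m$ are fixed, which is what makes this bound uniform in $x$.

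\textbf{Step 3 (union bound and Borel--Cantelli).} Summing over the $N_n+1$ grid points gives
\[
P\Bigl(\max_j|F_n(x_{n,j})-F(x_{n,j})|>\varepsilon_n\Bigr)\le C\sqrt{n}\,n^{-C_0^2/m^2}.
\]
Choosing $C_0$ large, say $C_0^2>5m^2/2$, makes the right-hand side summable in $n$. Borel--Cantelli then gives, almost surely for all large $n$, $\max_j|F_n(x_{n,j})-F(x_{n,j})|\le\varepsilon_n$. Combined with Step 1, this yields $\sup_x|F_n(x)-F(x)|\le 2\varepsilon_n=O(\sqrt{\log n/n})$ a.s.

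The only delicate point is Step 2: we must verify that centered indicator transforms remain $m$-LENQD with a dominating constant that does not depend on $x$. This follows from Lemma \ref{lem:1} applied to the non-increasing maps $t\mapsto I(t\le x)-F(x)$. Once that is in place, the rest is routine bookkeeping of constants so that the exponent beats the polynomial number of grid points.
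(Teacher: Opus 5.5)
Your proof is correct and follows essentially the same route as the paper: reduce the supremum to a finite grid using monotonicity of $F_n$ and $F$, apply the Bernstein-type bound of Lemma \ref{lem:3} to the centered indicators (which are $m$-LENQD by Lemma \ref{lem:1}), then take a union bound and apply Borel--Cantelli. The only difference is cosmetic. The paper cites Lemma 2 of Yang (2003) for a grid of $n-1$ points with $F(x_{n,k})=k/n$, which has discretization error $2/n$. You instead derive the reduction directly on a coarser grid of $O(\sqrt{n/\log n})$ points with $F$-spacing $\varepsilon_n$, and this changes only the bookkeeping of the constant $C_0$.
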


\begin{proof}[Proof of Lemma \ref{lem:4}]
	By Lemma 2 of Yang(2003) , let $\{x_{n,k}\}$ satisfy $F(x_{n,k}) = k/n$ for $n \ge 3$ and $k = 1, \dots, n-1$, then we have
	\[ \sup_{-\infty < x < +\infty} |F_n(x) - F(x)| \le \max_{1 \le k \le n-1} |F_n(x_{n,k}) - F(x_{n,k})| + \frac{2}{n}. 
	\]
	Set $\tau_n = n^{-1/2} \sqrt{\log n}$. It is easy to see that $n\tau_n \to \infty$. Therefore, for all $n$ large enough and any positive constant $D_0$, we have $2/n < D_0 \tau_n/2$. Hence, we can get that
	\begin{align} 
		P\left(\sup_x |F_n(x) - F(x)| > D_0\tau_n\right) &\le P\left(\max_{1 \le k \le n-1} |F_n(x_{n,k}) - F(x_{n,k})| > D_0 \tau_n/2\right) \notag \\
		&\le \sum_{k=1}^{n-1} P\left(|F_n(x_{n,k}) - F(x_{n,k})| > D_0 \tau_n/2\right).\label{eqn-17} 
	\end{align}
	Set $\xi_{j,k} = I(X_j < x_{n,k}) - EI(X_j < x_{n,k})$ for $j = 1, \dots, n$. Then it follows from Lemma \ref{lem:1} that $\{\xi_{j,k}\}$ is still $m$-LENQD with $E\xi_{j,k} = 0$, $|\xi_{j,k}| \le 2$, and $E\bigl(\xi_{j,k}\bigr)^2 \le 1$. Since $\tau_n\to 0$ as $n \to \infty$, we have $D_0 \tau_n/m < 3/4$ for sufficiently large $n$. So applying Lemma \ref{lem:3}, we obtain that
	\begin{align*}
		P\left(|F_n(x_{n,k})-F(x_{n,k})|>D_0 \tau_n/2\right)
		&= P\left(\left|\frac{1}{n}\sum_{j=1}^{n}\xi_{j,k}\right|>D_0\tau_n/2\right) \\
		&\le 2mM \exp\left\{ -\frac{(D_0 n\tau_n)^2 /4m^2}{2n + 4/3 D_0 n\tau_n/m} \right\} \\
		&\le 2mM \exp\left\{ -\frac{D_0^2 n\tau_n^2 /4m^2}{2 + 4/3 D_0\tau_n/m} \right\} \\ 
		&\le 2mM \exp\left\{ -\frac{D_0^2 n\tau_n^2}{12 m^2} \right\} \\
		&\le 2mM n^{-\frac{D_0^2}{12m^2}} 
		\le C n^{-\frac{D_0^2}{12 m^2}}.
		\label{eqn-18}
	\end{align*}
Taking $D_0 >0$  large enough such that $\dfrac{D_0^2}{12m^2} \geq 3$, then
\begin{equation}\label{eqn-19}
	P\left(\left|F_n(x_{nj}) - F(x_{nj})\right| > \frac{D_0\tau_n}{2}\right) \le C n^{-3}.
\end{equation}
Thus, with \eqref{eqn-17} and \eqref{eqn-19}, we obtain
	\[ P\left(\sup_x |F_n(x) - F(x)| > D_0 \tau_n\right) \le Cn^{-2}. 
	\]
Therefore, by the Borel-Cantelli lemma, Lemma \ref{lem:4} holds.
\end{proof}

\begin{proof}[Proof of Theorem \ref{th:1'}]
	According to the definition of $\widehat{F}_n(x)$ in \eqref{eq-2}, we have
	\begin{align*}
		\widehat{F}_n(x) &= \int_{-\infty}^{+\infty} K\left(\frac{x - y}{h_n}\right) dF_n(y) \\
		&= \left. K\left(\frac{x - y}{h_n}\right)F_n(y) \right|_{-\infty}^{+\infty} - \int_{-\infty}^{+\infty} F_n(y)  dK\left(\frac{x - y}{h_n}\right) \\
		&= \int_{-\infty}^{+\infty} F_n(x - h_n u)k(u) du,
	\end{align*}
	where $F_n(x)$ is defined by \eqref{eq-1}. Moreover,
	\begin{align}\label{eq-19}
		\widehat{F}_n(x) - F(x) &= \int_{-\infty}^{+\infty} [F_n(x - h_n u) - F(x - h_n u)]k(u) du \nonumber \\
		&\quad +\int_{-\infty}^{+\infty} [F(x - h_n u) - F(x)]k(u) du.
	\end{align}
	By Lemma \ref{lem:4}, it follows that
	\begin{equation}
		\sup_x \left|\int_{-\infty}^{+\infty} [F_n(x - h_n u) - F(x - h_n u)]k(u) du\right| = O\left(\sqrt{\frac{\log n}{n}}\right), \quad  \text{a.s.}\label{eq-20}
	\end{equation}	
	Applying a second-order Taylor expansion to $F(x - h_n u)$ at $x$, we have
	\begin{equation}
		F(x - h_n u) - F(x) = -f(x)h_n u + \frac{f'(\xi)}{2}h_n^2 u^2,\label{eq-21}
	\end{equation}	
	where $\xi$ lies between $x$ and $x - h_n u$. Consequently, combining \eqref{eq-21} with Assumptions (H1)(i) and (H3), we obtain
	\begin{align*}
		&\int_{-\infty}^{+\infty} [F(x - h_n u) - F(x)]k(u)\,du \\
		&= -h_n f(x) \int_{-\infty}^{+\infty} uk(u)\,du 
		+ \frac{h_n^2 f'(\xi)}{2} \int_{-\infty}^{+\infty} u^2 k(u)\,du \\
		&= O(h_n^2).
	\end{align*}
Thus,
	\begin{equation}
		\sup_x \left|\int_{-\infty}^{+\infty} [F(x - h_n u) - F(x)]k(u) du\right| = O(h_n^2).\label{eq-22}
	\end{equation}
	By combining \eqref{eq-19}--\eqref{eq-22}, we arrive at
	\begin{equation}
		\sup_x |\widehat{F}_n(x) - F(x)| = O\left(\sqrt{\frac{\log n}{n}}\right) + O(h_n^2), \quad \text{a.s.}\label{eq-23}
	\end{equation}
	This completes the proof of Theorem \ref{th:1'}.
\end{proof}

\begin{proof}[Proof of Theorem \ref{th:1}]
	Applying integration by parts and change of variables, we have
	\begin{align}
		\hat{f}_n(x) - f(x) &= \frac{1}{n h_n} \sum_{i=1}^{n} k\left(\frac{x - X_i}{h_n}\right) - f(x) \nonumber \\
		&= \frac{1}{h_n} \int_{-\infty}^{+\infty} k\left(\frac{x - s}{h_n}\right) dF_n(s) - \frac{1}{h_n} \int_{-\infty}^{+\infty} k\left(\frac{x - s}{h_n}\right) dF(s) \nonumber \\
		&\quad + \frac{1}{h_n} \int_{-\infty}^{+\infty} k\left(\frac{x - s}{h_n}\right) dF(s) - f(x) \int_{-\infty}^{+\infty} k(s) ds\nonumber \\
		&= -\frac{1}{h_n}\int_{-\infty}^{+\infty} k(u) dF_n(x - h_n u) + \frac{1}{h_n}\int_{-\infty}^{+\infty} k(u) dF(x - h_n u) \nonumber \\
		&\quad - \frac{1}{h_n}\int_{-\infty}^{+\infty} k(u) dF(x - h_n u) - f(x) \int_{-\infty}^{+\infty} k(u) du \nonumber \\
		&=- \frac{1}{h_n}\left(k(u)F_n(x - h_n u)\Big|_{-\infty}^{+\infty} - \int_{-\infty}^{+\infty} F_n(x - h_n u) dk(u)\right) \nonumber \\
		&\quad +\frac{1}{h_n} \left(k(u)F(x - h_n u)\Big|_{-\infty}^{+\infty} - \int_{-\infty}^{+\infty} F(x - h_n u) dk(u)\right) \nonumber \\
		&\quad + \int_{-\infty}^{+\infty} k(u) [f(x - h_n u) - f(x)] du \nonumber \\
		&= \frac{1}{h_n}\int_{-\infty}^{+\infty} [F_n(x - h_n u) - F(x - h_n u)] dk(u) \nonumber \\
		&\quad + \int_{-\infty}^{+\infty} k(u) [f(x - h_n u) - f(x)] du. \label{eq-24}
	\end{align}
	From Lemma \ref{lem:4} and the condition $\int_{-\infty}^{+\infty} |k'(u)| du < \infty$ in Assumption (H1)(ii), it follows that
	\begin{align}
		&\sup_{x \in \mathbb{R}} \left|\frac{1}{h_n}\int_{-\infty}^{+\infty} [F_n(x - h_n t) - F(x - h_n t)] dk(t)\right| \nonumber \\
		&\le \frac{1}{h_n}\sup_{(x-h_n u) \in \mathbb{R}} |F_n(x - h_n u) - F(x - h_n u)| \int_{-\infty}^{+\infty} |k'(u)| du \nonumber \\
		&= O\left(\sqrt{\frac{\log n}{nh_n^2}}\right), \quad \text{a.s.}\label{eq-25}
	\end{align}
	By Assumption (H1)(i) and the boundedness of $f''(x)$ in Assumption (H3), we obtain
	\begin{equation}
		\sup_{x \in \mathbb{R}}\left|\int_{-\infty}^{+\infty} k(u) [f(x - h_n u) - f(x)] du\right| \le \frac{h_n^2}{2} \sup_{x \in \mathbb{R}}|f''(\eta)| \int_{-\infty}^{+\infty} k(u) u^2 du = O(h^2_n),\label{eq-26}
	\end{equation}	
	where $\eta$ lies between $x$ and $x - h_n u$.
	
	Therefore, according to \eqref{eq-24}--\eqref{eq-26}, we have
	\[ \sup_{x \in \mathbb{R}} |\hat{f}_n(x) - f(x)| = O\left(\sqrt{\frac{\log n}{nh_n^2}}\right) + O(h_n^2), \quad \text{a.s.} 
	\]
	This completes the proof of Theorem \ref{th:1}.
	
\end{proof}
\begin{proof}[Proof of Theorem \ref{th:2}]
	Let $S(x)=1-F(x)$ , $\widehat{S}_n(x) = 1 - \widehat{F}_n(x)$. By \eqref{eq:hazard_rate_def} and \eqref{eq:4.2},
	\begin{equation}\label{eq-27}
		|\hat r_n(x)-r(x)|\le \frac{S(x)|\hat f_n(x)-f(x)|+f(x)|\widehat F_n(x)-F(x)|}{\widehat{S}_n(x)S(x)}.
	\end{equation}
	For $x\in[c,d]$, since $S(\cdot)$ is nonincreasing, $0<S(d)\le S(x)\le S(c)\le 1$. By Theorem 1,
	\[
	\sup_{x\in[c,d]}|\widehat{S}_n(x)-S(x)|
	=\sup_{x\in[c,d]}|\widehat F_n(x)-F(x)|\to0,\quad \text{a.s.}
	\]
	Hence, for $n$ sufficiently large,
	\[
	\sup_{x\in[c,d]}\widehat{S}_n(x)-S(x)|<\frac{S(d)}{2},\quad \text{a.s.}
	\]
	Therefore, for all $x\in[c,d]$,
	\[
	\widehat{S}_n(x)\ge S(x)-|\widehat{S}_n(x)-S(x)|
	\ge  S(d)-\frac{ S(d)}{2}=\frac{S(d)}{2}>0.
	\]
	Thus, for $n$ sufficiently large, by \eqref{eq-27},
	\[
	\sup_{x\in[c,d]}|\hat r_n(x)-r(x)|
	\le \frac{2}{S(d)}\sup_{x\in[c,d]}|\hat f_n(x)-f(x)|
	+\frac{2\sup_{x\in[c,d]}f(x)}{ S(d)^2}\sup_{x\in[c,d]}|\widehat F_n(x)- F(x)|.
	\]
	Using Theorem \ref{th:1'} and Theorem \ref{th:1}, together with the boundedness of $f(x)$, we obtain
	\[
	\sup_{x\in[c,d]}|\hat r_n(x)-r(x)|
	=O\!\left(\sqrt{\frac{\log n}{nh_n^2}}\right)+O(h_n^2),\quad \text{a.s.}
	\]
	This completes the proof of Theorem \ref{th:2}.
\end{proof}

\begin{proof}[Proof of Theorem \ref{th:3}]
	First, note that
	\begin{align}
		|f(\theta_n) - f(\theta)| &\leq |f(\theta_n) - \hat{f}_n(\theta_n)| + |\hat{f}_n(\theta_n) - f(\theta)| \nonumber\\
		&\leq \sup_{x \in \mathbb{R} } |\hat{f}_n(x) - f(x)| + |\hat{f}_n(\theta_n) - f(\theta)| \nonumber \\
		&\leq 2 \sup_{x \in \mathbb{R} } |\hat{f}_n(x) - f(x)|.\label{eq-28}
	\end{align}
	Then, the strong consistency of the mode estimator $\theta_n$ follows from Theorem \ref{th:1} and Assumption (H4). By performing a second-order Taylor expansion of $f(\theta_n)$ around $\theta$, we have
	\[
	f(\theta_n) - f(\theta) = \frac{1}{2} \left( \theta_n - \theta \right)^2 f^{''}(\theta_n^*),
	\]
	where $\theta_n^*$ lies between $\theta$ and $\theta_n$. According to \eqref{eq-28} and Assumptions (H3) and (H4), we get
	\[
	|\theta_n - \theta| \leq 2 \sqrt{\frac{\sup_{x \in \mathbb{R} } |\hat{f}_n(x) - f(x)|}{|f^{''}(\theta_n^*)|}}.
	\]
	Hence, by Theorem \ref{th:1}, we complete the proof of Theorem \ref{th:3}.		
\end{proof}
	
\end{document}